\newtheorem{theorem}{Theorem}[section]
\newtheorem{lemma}[theorem]{Lemma}
\newtheorem{proposition}[theorem]{Proposition}
\def\BState{\State\hskip-\ALG@thistlm}
\begin{document}

\title{Optimal Multiple Testing Under a Gaussian Prior on the Effect Sizes}
\author[1]{Edgar Dobriban\thanks{corresponding author: dobriban@stanford.edu}}
\author[2]{Kristen Fortney}
\author[2]{Stuart K. Kim}
\author[1]{Art B. Owen\thanks{owen@stanford.edu}}
\affil[1]{Department of Statistics, Stanford University, Stanford, California 94305, U.S.A.}
\affil[2]{Department of Developmental Biology, and Department of Genetics, Stanford University, Stanford, California 94305, U.S.A. }

\renewcommand\Authands{ and }

\maketitle

\begin{abstract}
We develop a new method for frequentist multiple testing with Bayesian prior information. Our procedure finds a new set of optimal p-value weights called the Bayes weights. Prior information is relevant to many multiple testing problems. Existing methods assume fixed, known effect sizes available from previous studies. However, the case of uncertain information is usually the norm. For a Gaussian prior on effect sizes, we show that finding the optimal weights is a non-convex problem.
Despite the non-convexity, we give an efficient algorithm that solves this problem nearly exactly. We show that our method can discover new loci in genome-wide association studies. On several data sets it compares favorably to other methods. Open source code is available.

\end{abstract}

\tableofcontents

\bigskip

\section{Introduction}

We are motivated by the genetics of human longevity.  Genome-wide association studies of longevity compare long-lived individuals (e.g. centenarians who live to 100 or older) to matched controls \citep{brooks2013genetics}. More than 500,000 genetic variants are tested for association to longevity. This is a large multiple testing problem. In addition to the multiplicity, the sample size is low - usually a few hundred. As a consequence, only a few loci have been replicably associated to human longevity. They do not explain the heritability of the trait \citep{iachine2006genetic}.

The multiplicity may be countered by testing only a few candidate variants selected based on prior scientific knowledge. In a separate work in preparation, led by Dr. Kristen Fortney, we find that a more general genome-wide test helps improve power in longevity. We leverage prior information from genome-wide association studies of age-related diseases, such as coronary artery disease and diabetes. For this task, we develop a new method of frequentist multiple testing with Bayesian prior information. In this paper we provide the theory for this method.

Our method is a type of p-value weighting scheme. P-value weighting is a general methodology for multiple testing that leverages independent prior information to improve power (see \cite{roeder2009genome,gui2012weighted} for a review). Suppose we test hypotheses $H_i$ $i = 1, 2, \ldots J$, via the p-values $P_i$. For a significance level $q \in [0,1]$, the weighted Bonferroni method declares the $i$-th hypothesis significant if $P_i \le q w_i$. The weights $w_i \ge 0$ are based on independent data. For weights averaging to 1, the family-wise error rate (the probability of making at least one error) is controlled strongly at $\alpha = Jq$. 

Previous work has found the optimal weights in a Gaussian model of hypothesis testing. Let the test statistics in the current study be $T_i \sim \mathcal{N}(\mu_i, 1)$, where $\mu_i$ are the \emph{means}, or \emph{effect sizes}, and test the null hypotheses $\mu_i \ge 0$ against $\mu_i <0$. We have some information about $\mu_i$ from prior studies. The works \cite{wasserman2006weighted, roeder2009genome, rubin2006method} consider the model where $\mu_i$ is known \emph{exactly} from the prior data, and the weights are allowed to depend on $\mu_i$. In this model they find the optimal weights maximizing the expected number of discoveries. We show this amounts to solving a convex optimization problem, but this was not used originally.

The assumption that $\mu_i$ are known precisely is problematic: if they were known, there should be no follow-up study. Instead, we account for uncertainty by considering the model $\mu_i \sim \mathcal{N}(\eta_i, \sigma_i^2)$. Here only the prior mean $\eta_i$ and standard error $\sigma_i$ are known from independent data, not the precise value of the effect sizes. However, finding the optimal weights, which we call Bayes weights, becomes a non-convex optimization problem. \cite{westfall1998using} use a direct numerical solver, and optimize for at most $4$ tests. 

We give an efficient method to find the optimal weights for a large number of tests. We solve the optimization problem exactly for small $q$. For larger $q$, we can solve it for a nearby $q^*$ such that $|q^*-q|\le 1/(2J)$. The cost in the first case is $O(J)$, the cost in the second case is $O(J\log J)$.  These are the costs per iteration of our optimization algorithm. We observe a near constant number of iterations used. The solution of non-convex optimization problems is challenging in general, thus it is perhaps remarkable that this problem admits a nearly exact solution.

This enables a new methodology for large-scale multiple testing that controls a frequentist error measure, while also taking into account Bayesian prior information. This method follows George Box's advice to be Bayesian when predicting but frequentist when testing \citep{box1980sampling}.  While this methodology was considered previously for a few tests \citep{westfall1998using}, we are the first to do it on a large scale. 

When prior information is uncertain, we show in simulations that the new scheme does better than its competitors. We also show theoretically, in a sparse mixture model, that weighting leads to substantially improved power. We apply our method to genome-wide association studies (GWAS). By analyzing several GWAS data sets we show its advantages compared to other methods.

This method should be useful for other problems in biology and elsewhere. Open source code is available from the authors. All our computational results are reproducible (see the Supplement). 

The contents of the paper are as follows: We discuss related work in Section \ref{sec:rel}. We develop the theory in Section \ref{sec:thy} and present simulations comparing our method to alternatives in Section \ref{sec:sim}. We apply our method to genome-wide association studies in Section \ref{sec:igwas} and use it to analyze GWAS data in Section \ref{sec:data_analysis}. The supplementary material contains a description of available software and instructions to reproduce our computational results (Section \ref{software}), as well as mathematical proofs (Section \ref{proofs}). 

\section{Related Work}
\label{sec:rel}

There is a large literature on related statistical methods for multiple testing with prior information. In early work, \cite{spjotvoll1972optimality} devised optimal multiple testing procedures treating tests unequally. Later it was recognized that Spjotvoll's results are equivalent to optimal p-value weighting methods. For instance, \cite{benjamini1997multiple} developed extensions of Spjotvoll's methods for p-value weighting. 

Leveraging Spjotvoll's results, \cite{wasserman2006weighted, roeder2009genome, rubin2006method} found the explicit formula for optimal weights in the Gaussian model $\mathcal{N}(\mu_i,1)$, assuming the effects are known exactly. This lead to an efficient method suitable for large applications. In the bioinformatics community \cite{eskin2008increasing, darnell2012incorporating} applied \cite{wasserman2006weighted}'s framework to GWAS. They accounted for correlations between the tests but assumed the effects are known exactly. 

The simple method of testing the top candidates from a prior study is also popular.  This is often known as two-stage testing or as a candidate study. A specific version for GWAS has been called the proxy-phenotype method \citep{rietveld2014common}. Using only the top candidates runs the risk of discarding many potentially useful hypotheses.


A missing ingredient is taking uncertainty into account. \cite{westfall1998using} considered a Gaussian model $\mathcal{N}(\mu_i,1)$ for hypothesis testing where prior distributions are known for the means. However, their optimization methods could handle only a small number ($J=4$) of tests.

\section{Theoretical results}
\label{sec:thy}

\subsection{Background}

In this section we present our theoretical results. As background, we begin with the case of known means. We work in the Gaussian means model of hypothesis testing:  We observe test statistics $T_i \sim \mathcal{N}(\mu_i, 1)$ and test each null hypothesis $H_i: \mu_i \ge 0$ against $\mu_i <0$.  The p-value for testing $H_i$ is $P_i = \Phi(T_i)$, where $\Phi$ is the normal cumulative distribution function. 

For a weight vector $w \in [0,\infty)^{J}$ and significance level $q \in [0,1]$, the weighted Bonferroni procedure rejects $H_{i}$ if $P_i \le q w_i$. Usual Bonferroni corresponds to $w_i=1$. Then the expected number of false rejections, known as the per-family error rate, equals: $\sum_{\mu_i \ge 0} \mathrm{pr}(P_i \le q w_i)   = q\sum_{\mu_i \ge 0} w_i.$ Therefore, if $\sum_{i=1}^{J} w_i  \le J$, the expected number of false rejections is controlled strongly, under any configuration of truth or falsehood of $H_i$, at $\alpha: = Jq$. By Markov's inequality this implies that the family-wise error rate, the probability of any false rejection, is also controlled at $\alpha$. This does not need independence of the $T_i$. We always assume that $q \le 1$, and usually $q \ll 1$. Without loss of generality we restrict the weights to $[0,1/q]$.

Let us denote the number of rejections by $R(w) = \sum_{i=1}^{J} I(P_i \le q w_i) $, where $I(\cdot)$ is the indicator function. The optimal weights in this model were found explicitly by \cite{wasserman2006weighted, roeder2009genome} and independently by \cite{rubin2006method}. Denoting by $E_T(\cdot)$ expectation with respect to $T_i$, they solve the constrained optimization problem
\begin{equation}
\label{P_d}
\max_{w \in [0,1/q]^J} \mbox{  }E_T\{ R(w)\} \mbox{\,   subject to  \,}   \sum_{i=1}^{J} w_i = J. 
\end{equation}

It was not noted in these works that this problem is convex. Usually convex programs are about minimization of convex functions. This is equivalent to maximizing concave functions. In our case the objective is a sum of terms of the form $\Phi\{ \Phi^{-1}(qw_i)-\mu_i\}$, whose concavity follows directly by differentiation. Yet, by simple Langrangian optimization, the above papers show that if all $\mu_i <0$, the optimal weights are $w_i = w(\mu_i)$, where
\begin{equation}
w(\mu) = \Phi\left(\frac{\mu}{2}+ \frac{c}{\mu}\right)/q.
\label{Spjotvoll}
\end{equation}

Here $c$ is the unique normalizing constant such that the weights sum to $J$. Interestingly, the weights are not monotonic as a function of $\mu$, but maximal for intermediate values of $\mu$ \citep{roeder2009genome}.

  As noted by \cite{roeder2009genome} the formula is a direct consequence of Spjotvoll's theory of optimality in multiple testing \citep{spjotvoll1972optimality}. Accordingly, we call these the \emph{Spjotvoll weights}.

\subsection{Weighting leads to a substantial power gain}
\label{power_gain_sparse} 
To illustrate theoretically that weighting can lead to increased power, we compare the power of optimal weighting and unweighted testing in a sparse mixture model. 

P-value weighting exploits the heterogeneity of the tests. In the simplest case there are only large and small effects, say $M \ll m \approx 0$. We want the $m \to 0$ limit, and for simplicity we suppose $m= 0$.  Similar results hold if $m \approx 0$. Let the fraction of large and small means be $\pi_1, \pi_0>0$, so that $\pi_1 J$ means are $M$, and the remaining $\pi_0J$ are 0. We solve below for the optimal weights.


\begin{proposition}[Optimal weights for sparse means] There is a set of optimal weights that gives the same weight to equal means, $w_0$ and $w_1$ to $0$ and $M$. These are:
$$
(w_0,w_1)=
\left\{
	\begin{array}{ll}
		(0,1/\pi_1)  & \mbox{\, if \, }  \pi_1 \Phi(-|M|/2) > q, \\
		\left\{ \frac{q - \pi_1\Phi(-|M|/2)}{q \pi_0} ,  \frac{\Phi(-|M|/2)}{q} \right\} & \mbox{\, if \, } \pi_1 \Phi(-|M|/2) \le q.
	\end{array}
\right.
$$

Further, the power $p^*$ of the optimal p-value weighting method is:
$$
p^*(\pi_1,M,q)=
\left\{
	\begin{array}{ll}
		\pi_1\Phi \left\{ \Phi^{-1} (q/\pi_1) + |M| \right\} & \mbox{\, if \, }  \pi_1 \Phi(-|M|/2) > q, \\
		q + \pi_1\left\{ \Phi (|M|/2) - \Phi(-|M|/2) \right\} &\mbox{\, if \, }  \pi_1 \Phi(-|M|/2) \le q.
	\end{array}
\right.
$$

\label{Opt_Sparse}
\end{proposition}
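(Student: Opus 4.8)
The plan is to collapse the $J$-dimensional problem to a one-dimensional one and then solve it by elementary calculus, handling the sum and box constraints through a case split.

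First I would symmetrize. For any feasible $w$, averaging the weights within each of the two mean-groups leaves the sum constraint $\sum_{i=1}^J w_i = J$ and the box constraints $w_i \in [0,1/q]$ intact, and, since each summand $\Phi\{\Phi^{-1}(qw_i)-\mu_i\}$ is concave in $w_i$ (as already noted for problem \eqref{P_d}), Jensen's inequality shows the averaged vector does not decrease $E_T\{R(w)\}$. Hence some optimum is constant on groups, so it suffices to optimize over a common weight $w_0$ on the $\pi_0 J$ null means and $w_1$ on the $\pi_1 J$ means equal to $M$. Writing the per-test rejection probability as $\Phi\{\Phi^{-1}(qw)-\mu\}$, using $\mu=0$ for the null group (so its rejection probability is exactly $qw_0$) and $-M=|M|$ for the signal group, the objective becomes, after dividing by $J$, $f(w_0,w_1)=\pi_0 q w_0 + \pi_1\Phi\{\Phi^{-1}(qw_1)+|M|\}$, to be maximized subject to $\pi_0 w_0 + \pi_1 w_1 = 1$ and $w_0,w_1\in[0,1/q]$.

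Next I would eliminate $w_0$ through the equality constraint, $\pi_0 q w_0 = q(1-\pi_1 w_1)$, which collapses the objective to $f = q + \pi_1\,g(w_1)$ with $g(w_1)=\Phi\{\Phi^{-1}(qw_1)+|M|\}-qw_1$. Substituting $t=\Phi^{-1}(qw_1)$ gives the clean form $g=\Phi(t+|M|)-\Phi(t)$, whose derivative in $t$ is $\varphi(t+|M|)-\varphi(t)$, where $\varphi=\Phi'$. By symmetry of the Gaussian density this vanishes exactly at $t=-|M|/2$, is positive for $t<-|M|/2$ and negative beyond, so $g$ is unimodal in $w_1$ with unconstrained maximizer $w_1^\star=\Phi(-|M|/2)/q$. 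This single computation already produces the quantity $\Phi(-|M|/2)/q$ appearing in the second case.

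Finally I would impose feasibility. The active upper bound on $w_1$ is $1/\pi_1$ (from $w_0\ge0$), and one checks $w_1^\star\le 1/q$ always holds. If $\pi_1\Phi(-|M|/2)\le q$ the interior maximizer is feasible, giving $w_1=\Phi(-|M|/2)/q$ and $w_0=\{q-\pi_1\Phi(-|M|/2)\}/(q\pi_0)$, and substituting back yields the second-case power $q+\pi_1\{\Phi(|M|/2)-\Phi(-|M|/2)\}$. If instead $\pi_1\Phi(-|M|/2)>q$, then $w_1^\star>1/\pi_1$; since here $\pi_1>q$ the bound $1/\pi_1<1/q$ is the binding one, so by unimodality $g$ is maximized at $w_1=1/\pi_1$, forcing $w_0=0$ and the first-case power $\pi_1\Phi\{\Phi^{-1}(q/\pi_1)+|M|\}$. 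The main thing to get right is the bookkeeping of the constraints---deciding which of $w_1\le1/q$ and $w_1\le1/\pi_1$ binds in each regime, and verifying the remaining box constraint $w_0\le1/q$ (which holds, e.g., whenever $q\le\pi_0$); the optimization itself is a one-line derivative once the problem has been reduced.
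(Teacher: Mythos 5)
Your proof is correct and follows essentially the same route as the paper's: Jensen's inequality on the (weakly) concave summands to collapse to two group weights, elimination of the null-group weight via the sum constraint, the substitution $t=\Phi^{-1}(qw_1)$, and one-dimensional calculus showing $\Phi(t+|M|)-\Phi(t)$ peaks at $t=-|M|/2$, with the identical case split on whether $\pi_1\Phi(-|M|/2)\le q$. The only cosmetic difference is the order of operations: the paper substitutes $c_i=\Phi^{-1}(qw_i)$ for \emph{both} groups before eliminating a variable (which makes the box constraints $w_i\le 1/q$ automatically vacuous), whereas you eliminate $w_0$ first and then change variables, so you must track the box constraints by hand --- which you do, in fact slightly more explicitly than the paper.
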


If $|M|$ is small enough that $\pi_1 \Phi(-|M|/2) > q$, all the weight is placed on the larger means. This is the behavior we expect intuitively. However, if $|M|$ is large enough that $\pi_1 \Phi(-|M|/2)\le q$, then it is advantageous to place some weight on the small means. The reason is that such a large $|M|$ will be detected with very high probability.

The power of unweighted Bonferroni is $
p_\mathrm{unif}(\pi_1,M,q) = \pi_0 q + \pi_1 \Phi\{\Phi^{-1}(q) + |M|\}. 
$

\begin{figure}
\centering
\begin{subfigure}{.5\textwidth}
\centering
\includegraphics[scale=1]{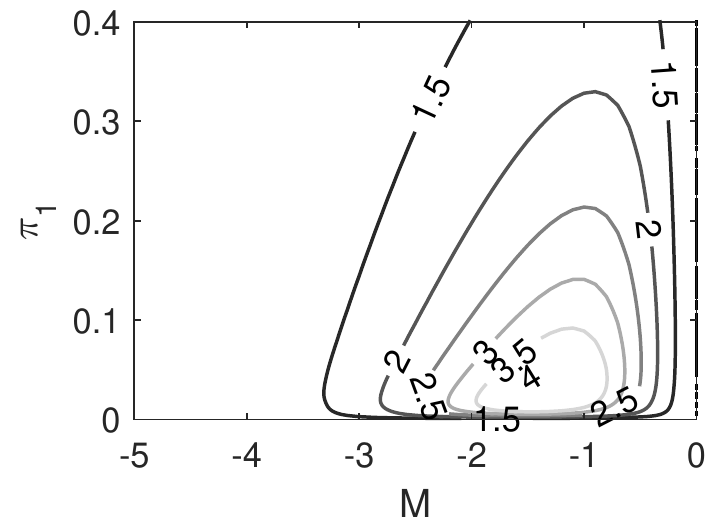}
\caption{Power gain.}
\end{subfigure}%
\begin{subfigure}{.5\textwidth}
  \centering
  \includegraphics[scale=1]{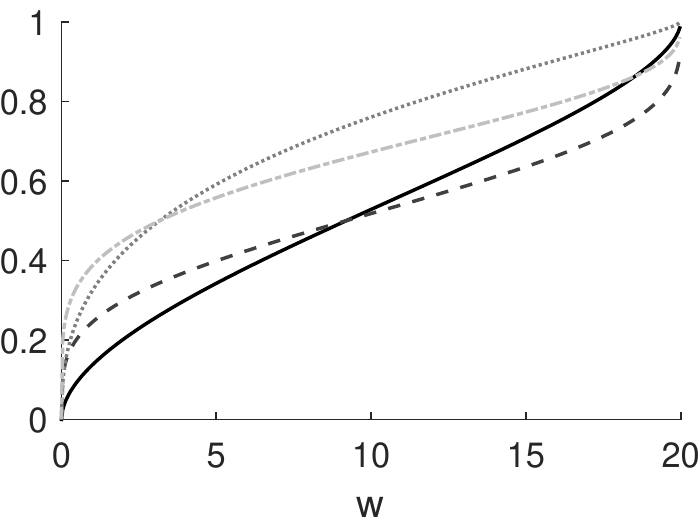}
  \caption{Non-convex summand.}
\end{subfigure}
\caption{(a) Contour plot of power ratio of optimal and unweighted testing for sparse means. (b) Plots of four different instances of the function that is summed in the optimization objective. The non-convex function $w \to \Phi[\{\Phi^{-1}(qw)-\eta\}/\gamma]$ with $q = 0.05$ is plotted for the following pairs $(\eta,\sigma)$: $(-0\cdot1,1)$ solid,  $(-0\cdot1,2)$ dashed, $(-1,1)$ dotted, $(-1,2)$ dot-dashed.}
\label{Power_Sparse}
\end{figure}

In Figure \ref{Power_Sparse} (a), we plot the ratio of $p^*(\pi_1,M,q)/ p_\mathrm{unif}(\pi_1,M,q)$ for $q = 10^{-3}$ for a range of $M$ and $\pi_1$. We observe that for most effect sizes  $M \in [-2.5,-0.25]$, and $\pi_1 < 0.4$, we get a power boost of at least $50\%$. Further, there is a hotspot where the power gain can be 3-4 fold. 
Thus, optimal weighting can lead to a significant boost in power. 

\subsection{Weights with imperfect prior knowledge}
\label{main_theory}

In the previous sections it was assumed that the effects $\mu_i$ were known precisely. We will instead assume that we have uncertain prior information $\mu_i \sim \mathcal{N}(\eta_i,\sigma_i^2)$ about them. 

We maximize the expected power $E_{\mu}[ E_{T}\{ R(w)\}]$. The expectation is with respect to the random $T_i$ and $\mu_i$. Introducing $\gamma_i = (\sigma_i^2+1)^{1/2}$, the \emph{Bayes weights} problem becomes: 
\begin{equation}
\label{P_avg}
\max_{w\in [0,1/q]^J} \mbox{  } \sum_{i=1}^{J}\Phi\left\{\frac{\Phi^{-1}(qw_i)-\eta_i}{\gamma_i}\right\}
 \mbox{\,   subject to  \,}   \sum_{i=1}^{J} w_i = J. 
\end{equation}

This objective is not concave if any $\gamma_i >1$. To help with visualization, the function $w \to \Phi[\{\Phi^{-1}(qw)-\eta\}/\gamma]$ is plotted in Figure \ref{Power_Sparse} (b) for four parameter pairs $(\eta,\gamma)$. The function is increasing, and its curvature has two intervals: concavity, followed by convexity.

Our key contribution is to solve this problem efficiently for large $J$. Our results here are twofold. First, we can solve the problem exactly in the special case when $q$ is sufficiently small. Second, we have a nearly exact solution for arbitrary $q$. We start with the simpler first case. Let us define
\begin{equation}
\label{root}
c(\eta,\gamma; \lambda) = - \frac{\eta + \gamma \{
\eta^2 + 2\left(\gamma^2-1\right)\log(\gamma\lambda)\}^{1/2}}{\gamma^2-1}.
\end{equation}

It turns out that $c$ is the optimal critical value when
\begin{equation}
\label{dual_variable_existence}
   q \le \frac 1J \sum_{i=1}^{J} \Phi\left\{c(\eta_i,\gamma_i;1) \right\}.\\ 
\end{equation}

In our data analysis examples and simulations, this upper bound requires that $q$ be below values in the range $0\cdot 1 - 0\cdot 3$. In the next result we find the exact optimal weights for small $q$ when all $\sigma_i>0$.

\begin{theorem}[Form of Bayes weights] If the significance level $q \in [0,1]$ is small enough that \eqref{dual_variable_existence} holds then the optimal \emph{Bayes weights} maximizing the average power \eqref{P_avg} are $
w_i = w(\eta_i,\gamma_i; \lambda) = \Phi\{c(\eta_i,\gamma_i; \lambda)\}/q
$, where $\lambda \ge 1 $ is the unique constant such that $\sum_{i=1}^{J} w(\eta_i, \gamma_i; \lambda) = J$.
\label{Form of Bayes weights}
\end{theorem}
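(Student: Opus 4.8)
The plan is to circumvent the non-concavity of \eqref{P_avg} by a Lagrangian sufficiency (weak duality) argument, which certifies \emph{global} optimality without any concavity assumption. Write $f_i(w)=\Phi\{(\Phi^{-1}(qw)-\eta_i)/\gamma_i\}$ for the $i$-th summand, so the objective is $\sum_i f_i(w_i)$. The basic observation I would use is: if for some multiplier $\lambda$ the vector $w^\star$ maximizes the \emph{separable} relaxed objective $\sum_i\{f_i(w_i)-\lambda q\,w_i\}$ over the box $[0,1/q]^J$, and if in addition $\sum_i w_i^\star=J$, then $w^\star$ solves \eqref{P_avg}. This holds because for any feasible $w$ one has $\sum_i f_i(w_i)-\lambda q J=\sum_i\{f_i(w_i)-\lambda q w_i\}\le \sum_i\{f_i(w_i^\star)-\lambda q w_i^\star\}=\sum_i f_i(w_i^\star)-\lambda q J$. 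Thus the whole problem decouples into $J$ one-dimensional maximizations of $h_i(w):=f_i(w)-\lambda q\,w$ on $[0,1/q]$, the only coupling being the scalar equation $\sum_i w_i^\star=J$ that pins down $\lambda$.

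Next I would analyze each $h_i$. Differentiating $f_i$ and using $\tfrac{d}{dw}\Phi^{-1}(qw)=q/\phi(\Phi^{-1}(qw))$ with $\phi=\Phi'$, the stationarity equation $f_i'(w)=\lambda q$ reduces, after writing $u=\Phi^{-1}(qw)$ and taking logarithms of the resulting Gaussian-density ratio, to the quadratic $(\gamma_i^2-1)u^2+2\eta_i u-\{\eta_i^2+2\gamma_i^2\log(\gamma_i\lambda)\}=0$. Its two roots are exactly the $\pm$ versions of \eqref{root}, the smaller one being $c(\eta_i,\gamma_i;\lambda)$. Since $f_i'(w)\to+\infty$ as $w\to0^+$ and as $w\to(1/q)^-$ (the density ratio blows up because $\gamma_i>1$), and $f_i'=\lambda q$ has exactly these two solutions, $h_i$ is increasing, then decreasing, then increasing: the smaller root $w_-=\Phi\{c(\eta_i,\gamma_i;\lambda)\}/q$ is a local maximum and the larger root is a local minimum. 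Consequently the global maximizer of $h_i$ on $[0,1/q]$ is either $w_-$ or the right endpoint $1/q$.

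The crux is to rule out the endpoint, and this is exactly where the hypothesis \eqref{dual_variable_existence} (equivalently $\lambda\ge1$) enters. At the left endpoint $h_i(0)=f_i(0)=\Phi(-\infty)=0$, and since $h_i$ is increasing on $(0,w_-)$ one gets $h_i(w_-)>0$. At the right endpoint $h_i(1/q)=f_i(1/q)-\lambda=1-\lambda$, which is $\le0$ precisely when $\lambda\ge1$. Hence $h_i(w_-)>0\ge h_i(1/q)$, so $w_-=\Phi\{c(\eta_i,\gamma_i;\lambda)\}/q$ is the global maximizer of the $i$-th relaxed summand. It then remains to produce the multiplier: the map $\lambda\mapsto \sum_i \Phi\{c(\eta_i,\gamma_i;\lambda)\}/q$ is continuous and strictly decreasing on $[1,\infty)$, since as $\lambda$ grows the square-root term in \eqref{root} grows and $c$ decreases; it tends to $0$ as $\lambda\to\infty$, and at $\lambda=1$ equals $\tfrac1q\sum_i\Phi\{c(\eta_i,\gamma_i;1)\}\ge J$ by \eqref{dual_variable_existence}. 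So there is a unique $\lambda\ge1$ with $\sum_i w_-=J$, and feeding this $\lambda$ and the corresponding $w^\star$ into the weak-duality step completes the argument.

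I expect the main obstacle to be the global (rather than merely local) optimality of each per-coordinate subproblem: because $f_i$ is genuinely non-concave, one cannot invoke first-order sufficiency, and a naive stationarity analysis would leave open the competing boundary solution $w_i=1/q$. The payoff of the approach above is that this boundary competition collapses to the single clean inequality $1-\lambda\le0$, so the small-$q$ condition \eqref{dual_variable_existence} is not an artifact but is exactly the statement that the useful interior critical value $c(\eta_i,\gamma_i;\lambda)$ beats the degenerate ``spend the entire budget on one test'' solution. Minor points to verify along the way are that all $\gamma_i>1$ (guaranteed by $\sigma_i>0$) keeps \eqref{root} well defined and the discriminant positive for $\lambda\ge1$, and that the smaller root indeed lies on the concave branch so that it is a maximum and not a minimum.
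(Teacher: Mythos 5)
Your proposal is correct and follows essentially the same route as the paper's proof: a separable Lagrangian relaxation, reduction of each per-coordinate stationarity condition to a quadratic in $\Phi^{-1}(qw_i)$ whose smaller root is the interior local maximum, the observation that $\lambda\ge 1$ makes this interior value (which exceeds the limit $0$ at the left end) beat the boundary value $1-\lambda$ at $w_i=1/q$, and monotonicity of the weight sum in $\lambda$ to pin down the multiplier via condition \eqref{dual_variable_existence}. The only cosmetic difference is that you work directly in the $w$ variables with penalty $\lambda q w_i$, whereas the paper first substitutes $c_i=\Phi^{-1}(qw_i)$ and penalizes $\lambda\Phi(c_i)$; these are identical under the change of variables, and your explicit weak-duality inequality is just a spelled-out version of the paper's appeal to Lagrangian sufficiency.
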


In the supplementary material, we solve this problem by maximizing the Lagrangian. The two key properties are the joint separability of the objective and constraint; and the analytic tractability of the Gaussian. 

Figure \ref{reg_weights} shows surface and contour plots of an instance of the optimal weights $w(\eta,\sigma)$, as a function of the mean $\eta$ and standard deviation $\sigma$. In the theorem the weights are a function of $(\eta,\gamma)$. Here and below, we will often view them as a function of $(\eta,\sigma)$, via the natural map $\gamma^2 = \sigma^2+1$. As the standard error $\sigma$ becomes small, our weights tend to the Spjotvoll weights: 

\begin{figure}
\centering
\begin{subfigure}{.5\textwidth}
  \centering
  \includegraphics[scale=1.2]{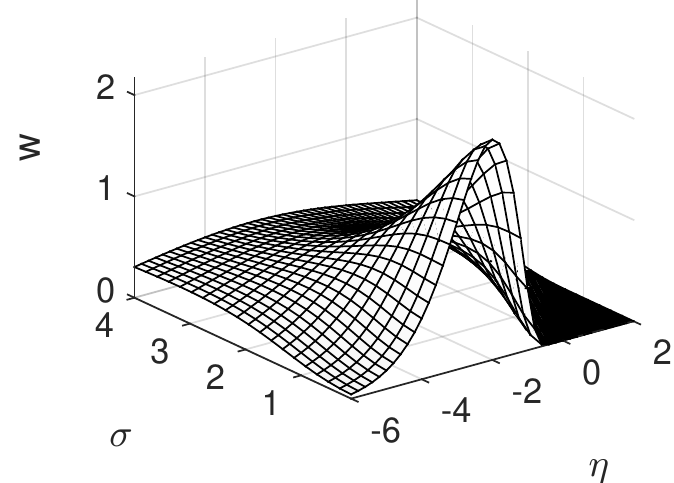}
  \caption{Surface}
\end{subfigure}%
\begin{subfigure}{.5\textwidth}
  \centering
  \includegraphics[scale=1.2]{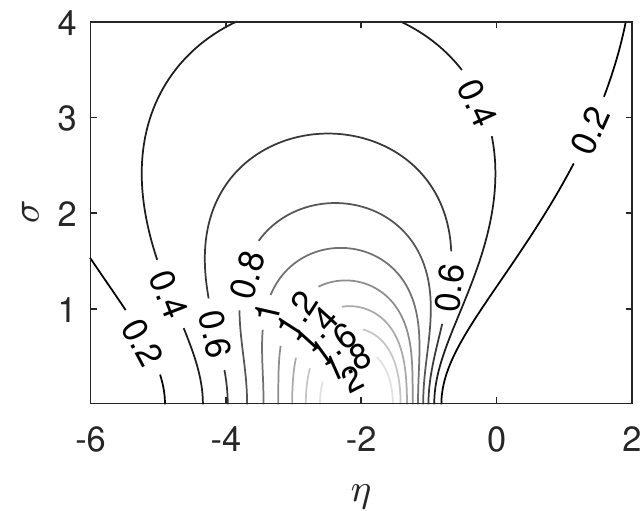}
  \caption{Contour}
\end{subfigure}
\caption{Bayes weights: (a) surface and (b) contour plots of the Bayes weight function $w(\eta,\sigma)$ defined in Theorem \ref{Form of Bayes weights}. Spjotvoll weights are on the segment $\sigma=0$, $\eta<0$.}
\label{reg_weights}
\end{figure}

\begin{proposition}[Recovering the Spjotvoll weights]
\label{recover_spjotvoll}
For any $\lambda$ and $\eta<0$, the Bayes weight function defined by $w(\eta,\gamma; \lambda) = \Phi\{c(\eta,\gamma; \lambda)\}/q$ tends to the Spjotvoll weight defined in \eqref{Spjotvoll} as $\sigma\to0$.
\end{proposition}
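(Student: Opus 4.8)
The plan is to show that the critical value $c(\eta,\gamma;\lambda)$ converges, as $\sigma \to 0$, to the argument of $\Phi$ appearing in the Spjotvoll weight \eqref{Spjotvoll}, after which the claim follows from continuity of $\Phi$. Writing $\epsilon = \gamma^2 - 1 = \sigma^2$, so that $\epsilon \to 0$ and $\gamma = (1+\epsilon)^{1/2} \to 1$ as $\sigma \to 0$, I would recast \eqref{root} as a ratio $c(\eta,\gamma;\lambda) = -N(\epsilon)/\epsilon$, where $N(\epsilon) = \eta + \gamma\,\{\eta^2 + 2\epsilon\log(\gamma\lambda)\}^{1/2}$.

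The central observation is that this is a $0/0$ indeterminate form whose resolution hinges on the hypothesis $\eta < 0$. First I would verify that $N(0) = \eta + |\eta| = 0$: here the square root evaluates to $\{\eta^2\}^{1/2} = |\eta| = -\eta$ precisely because $\eta < 0$, producing the cancellation that makes $N(\epsilon) = O(\epsilon)$ rather than $O(1)$. This is the only place the sign of $\eta$ enters, and it is exactly what allows the $1/(\gamma^2-1)$ prefactor in \eqref{root} to have a finite limit.

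With the indeterminacy confirmed, I would apply L'H\^{o}pital's rule (equivalently, a first-order Taylor expansion of $N$ in $\epsilon$). Differentiating $N$ and evaluating at $\epsilon = 0$, using $\gamma(0)=1$, $\gamma'(0) = 1/2$, and $\tfrac{d}{d\epsilon}\log(\gamma\lambda) = \gamma'/\gamma$, gives $N'(0) = \tfrac12 |\eta| + (\log\lambda)/ |\eta|$. Hence $c(\eta,\gamma;\lambda) \to -N'(0) = -\tfrac12|\eta| - (\log\lambda)/|\eta|$, which upon substituting $|\eta| = -\eta$ becomes $\eta/2 + (\log\lambda)/\eta$. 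This matches the Spjotvoll argument $\mu/2 + c/\mu$ at $\mu = \eta$ under the identification $c = \log\lambda$ of the normalizing constant. Continuity of $\Phi$ then yields $w(\eta,\gamma;\lambda) = \Phi\{c(\eta,\gamma;\lambda)\}/q \to \Phi(\eta/2 + (\log\lambda)/\eta)/q$, which is the Spjotvoll weight.

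The computation itself is routine; the only real obstacle is the bookkeeping around the sign of $\eta$ and checking that the argument of the square root stays positive as $\epsilon \to 0^+$, so that $N$ is differentiable there and L'H\^{o}pital applies. For fixed $\eta \neq 0$ this holds for all sufficiently small $\epsilon$, so no additional care is needed beyond noting that the limit is one-sided, taken from $\sigma > 0$ (that is, $\gamma > 1$).
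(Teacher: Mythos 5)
Your proposal is correct and follows essentially the same route as the paper: compute the limit of the critical value $c(\eta,\gamma;\lambda)$ as $\sigma\to 0$, obtain $\eta/2+(\log\lambda)/\eta$, and conclude by continuity of $\Phi$, identifying the Spjotvoll normalizing constant as $c=\log\lambda$. The only difference is how the $0/0$ indeterminacy (which, as you correctly isolate, comes from $\eta+|\eta|=0$ when $\eta<0$) is resolved: the paper rationalizes algebraically, multiplying by the conjugate to rewrite $c_1=-\bigl(\eta^2+2\gamma^2\log(\gamma\lambda)\bigr)/\bigl(\{\eta^2+R^2\}^{1/2}-\eta\bigr)$ with $R^2=(\gamma^2-1)(\eta^2+2\gamma^2\log(\gamma\lambda))$, after which the limit is immediate, whereas you use L'H\^{o}pital/Taylor via $N'(0)$; both are valid, and your care about one-sidedness and positivity of the square-root argument covers the only points where the calculus route needs justification.
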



With $\sigma_i > 0$, the weights are regularized: more extreme weights are  shrunk towards a common value in a nonlinear way. For finite $\sigma_i$ our weights can be viewed as a smooth interpolation between Spjotvoll and uniform weights. It is reasonable to think at first that as all $\sigma_i \to \infty$, the best weight allocation is uniform. This is not the case. As we will show below, an interesting symmetry breaking phenomenon occurs.

Consider a weight vector $w$ that equals $1/q$ for $\left \lfloor{Jq}\right \rfloor$ indices, and assume $Jq$ is not an integer. Distribute the remaining strictly positive weight equally among the remaining hypotheses. Now it is easy to see that for the $1/q$ weights we always reject, thus the power equals 1. For the remaining weights the objective $\Phi[\{\Phi^{-1}(qw)-\eta\}/\gamma]$ tends to $\Phi(0)=1/2$ as $\sigma \to \infty$ (i.e. $\gamma \to \infty$). This shows that the limiting power is $\left \lfloor{Jq}\right \rfloor + (J-\left \lfloor{Jq}\right \rfloor)/2 = (J +  \lfloor {Jq} \rfloor)/2$. This is larger than $J/2$, the limit power of uniform weighting! This illustrates the symmetry breaking phenomenon caused by the extreme non-convexity of the optimization problem.

Fortunately the situation is better as long as condition \eqref{dual_variable_existence} holds. This condition is easy to check for any given parameters $q$ and $(\eta_i,\sigma_i)$, $ i = 1, \ldots, J$.  In addition we will now show theoretically that the constraint is mild.  Often, even if $J$ is large, we want to keep $\alpha = Jq$ small, because $\alpha$ is the number of false rejections we tolerate.  In this regime the condition holds as long as there are a few average-sized negative prior means $\eta_i$ (denote $z_{c} = \Phi^{-1}(c)$).

\begin{proposition}[Simple condition]
\label{simple_condition_thm}
Condition \eqref{dual_variable_existence} is true (and so Theorem \ref{Form of Bayes weights} applies) if there are $K$ distinct indices $i$ with negative $\eta_i$, such that $
\gamma_i^2\log(\gamma_i^2)/|z_{\alpha/K}| \le |\eta_i| \le |z_{\alpha/K}|.
$
\end{proposition}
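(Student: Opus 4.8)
The plan is to exploit the nonnegativity of $\Phi$ in the sum defining condition \eqref{dual_variable_existence} and reduce the statement to a pointwise lower bound on each of the $K$ favourable terms. First I would rewrite \eqref{dual_variable_existence} in the equivalent form $\alpha = Jq \le \sum_{i=1}^J \Phi\{c(\eta_i,\gamma_i;1)\}$. Since every summand is nonnegative, it suffices to show that the $K$ designated indices alone account for a total of $\alpha$; and because there are exactly $K$ of them, it is enough to prove that each contributes at least $\alpha/K$, i.e.
$$\Phi\{c(\eta_i,\gamma_i;1)\} \ge \alpha/K \quad\Longleftrightarrow\quad c(\eta_i,\gamma_i;1) \ge z_{\alpha/K},$$
the equivalence being monotonicity of $\Phi$.

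Next I would insert the closed form \eqref{root} at $\lambda = 1$, using $\log(\gamma_i) = \tfrac12\log(\gamma_i^2)$ and $\eta_i = -|\eta_i|$ (recall $\eta_i<0$), and abbreviate $a = |\eta_i|$, $z = |z_{\alpha/K}|$, $g = \gamma_i^2 > 1$, $L = \log g > 0$. The target inequality $c \ge -z$ becomes, after clearing the positive denominator $g-1$ and isolating the square root,
$$a + z(g-1) \ge \gamma_i\{a^2 + (g-1)L\}^{1/2}.$$
Both sides are nonnegative (the radicand is positive since $g>1$), so squaring is an equivalence; expanding and cancelling a factor of $(g-1)>0$ reduces the claim to the quadratic inequality
$$f(a) := -a^2 + 2az + z^2(g-1) - gL \ge 0.$$

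The crux is then a one-variable analysis of $f$ on the interval fixed by the hypothesis, $a\in[gL/z,\,z]$. Here $f$ is a concave parabola in $a$ with vertex at $a=z$, hence increasing on $[gL/z,z]$, so it attains its minimum at the left endpoint $a=gL/z$. Evaluating,
$$f(gL/z) = gL\bigl(1 - gL/z^2\bigr) + z^2(g-1),$$
and both terms are nonnegative: $z^2(g-1)\ge 0$ because $g>1$, while $1 - gL/z^2 \ge 0$ is exactly $gL\le z^2$, which follows from $gL/z\le a\le z$. Thus $f\ge 0$ throughout, giving $c(\eta_i,\gamma_i;1)\ge z_{\alpha/K}$ for each of the $K$ indices and, upon summing, condition \eqref{dual_variable_existence}.

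I expect the main obstacle to be bookkeeping rather than depth: getting the reduction to $f$ right (the $\lambda=1$ substitution, the sign of $\eta_i$, and the legitimacy of squaring), and recognizing that the two-sided bound does double duty, since the upper bound $a\le z$ places the vertex to the right so that the minimum sits at the left endpoint, while the lower bound $gL/z\le a$ (equivalently $gL\le z^2$) is precisely what makes that left-endpoint value nonnegative. Throughout I assume $\gamma_i>1$ (that is $\sigma_i>0$), as in Theorem \ref{Form of Bayes weights}, and that $\alpha/K<1/2$ so that $z_{\alpha/K}<0$ and $|z_{\alpha/K}|=-z_{\alpha/K}$; the degenerate case $\gamma_i=1$ is the Spjotvoll limit of Proposition \ref{recover_spjotvoll}.
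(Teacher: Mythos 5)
Your proof is correct, and its opening reduction is exactly the paper's: drop the non-designated terms by nonnegativity of $\Phi$, and show each of the $K$ designated indices contributes $\Phi\{c(\eta_i,\gamma_i;1)\}\ge \alpha/K$, equivalently $c(\eta_i,\gamma_i;1)\ge z_{\alpha/K}$. Where you diverge is in how this scalar inequality is verified. The paper never squares: it rationalizes the root at $\lambda=1$ to get
$c(\eta,\gamma;1) = -\bigl(\eta^2+\gamma^2\log\gamma^2\bigr)\big/\bigl[|\eta|+\gamma\{\eta^2+(\gamma^2-1)\log\gamma^2\}^{1/2}\bigr]$,
bounds the denominator below by $2|\eta|$, and is left with the sufficient condition $|\eta_i|/2+\gamma_i^2\log(\gamma_i^2)/(2|\eta_i|)\le |z_{\alpha/K}|$, which the two hypothesis bounds deliver as $|z_{\alpha/K}|/2+|z_{\alpha/K}|/2$. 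You instead keep the exact expression, clear the denominator, square (legitimately, since both sides are nonnegative), and minimize the concave quadratic $f(a)=-a^2+2az+z^2(g-1)-gL$ over the hypothesis interval $[gL/z,z]$, using the upper bound to place the vertex at the right endpoint and the lower bound to make the left-endpoint value nonnegative; all these algebraic steps check out. The trade-off: the paper's route is more modular, isolating the intermediate inequality \eqref{simpler_inequality} as a reusable standalone sufficient condition, and its final splitting step is transparent; your route is lossless up to the endpoint evaluation, since your $f(a)\ge 0$ is actually \emph{equivalent} to $c(\eta_i,\gamma_i;1)\ge z_{\alpha/K}$, and its roots $z\pm\gamma\{z^2-\log(\gamma^2)\}^{1/2}$ reveal that the stated interval $[gL/z,z]$ is not tight — information the paper's one-sided bound cannot provide. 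Finally, both arguments share the implicit restriction $\alpha/K\le 1/2$ (so that $z_{\alpha/K}\le 0$); you state it explicitly, which the paper does not.
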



If $\alpha/K \to 0$, it is well-known that asymptotically $|z_{\alpha/K}| \sim \{2 \log(K/\alpha)\}^{1/2}$, so the simple condition holds as long as:
$ \gamma_i^2\log(\gamma_i^2) \{2 \log(K/\alpha)\}^{-1/2} \lesssim |\eta_i| \lesssim \{2 \log(K/\alpha)\}^{1/2}.$ This is a quite weak requirement. 
For instance, if $K=10$, $\alpha=0\cdot 01$, then $\{2 \log(K/\alpha)\}^{1/2}$ approximately equals $3.7$. Continuing this example, if $\sigma=1$, so that $\gamma^2 = 2$, and $\gamma^2  \log (\gamma^2)/3.7 = 0.16$, then we only need 10 hypotheses with $0.16 \le |\eta| \le 3.7$.

When $q$ is small, we use Newton's method to find the right constant $\lambda$ from the theorem via a one-dimensional line search. 
The function evaluations cost $O(J)$ per iteration, and empirically it takes a small number of iterations independent of $J$ to converge. In our data analysis section, we solve problems with more than 2 million tests in a few seconds on a desktop computer. 

Now we move to presenting our result for the general case.

\begin{theorem}[Weights in the general case]
For any $q \in [0,1]$, the non-convex Bayes weights problem can be solved for a nearby $q^* \in [0,1]$, for which $ |q^*-q| \le 1/(2J)$.
The optimal weights and $q^*$ can be found in $O(J \log J)$ steps.
\label{complete_nonconvex}
\end{theorem}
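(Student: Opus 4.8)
The plan is to strip out the non-convexity by passing to a coordinatewise Lagrangian and then invoking the fact that Lagrangian sufficiency certifies \emph{global} optimality even without concavity. First I would change variables to $u_i = qw_i \in [0,1]$, which recasts \eqref{P_avg} as
\[
\max_{u \in [0,1]^J} \sum_{i=1}^J g_i(u_i) \quad\text{subject to}\quad \sum_{i=1}^J u_i = \alpha := Jq,
\]
where $g_i(u) = \Phi[\{\Phi^{-1}(u)-\eta_i\}/\gamma_i]$. The point of this reformulation is that the constraint now reads $\sum_i u_i = Jq$, so replacing $q$ by a nearby $q^*$ is precisely the same as replacing the target sum $\alpha$ by $\alpha^* = Jq^*$; the bound $|q^*-q|\le 1/(2J)$ is equivalent to $|\alpha^*-\alpha|\le 1/2$. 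The governing tool is the Lagrangian sufficiency theorem: if for some multiplier $\lambda$ the coordinatewise maximizers $u_i(\lambda) \in \arg\max_{u\in[0,1]}\{g_i(u)-\lambda u\}$ happen to satisfy $\sum_i u_i(\lambda)=\alpha^*$, then $u(\lambda)$ is globally optimal under the constraint $\sum_i u_i = \alpha^*$. This implication needs no concavity, so it applies to our non-convex objective directly; the sole difficulty, as will emerge, is that exact feasibility at the original $\alpha$ may be unattainable.

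Next I would analyze the scalar problem $\max_{u\in[0,1]}\{g_i(u)-\lambda u\}$. As recorded after \eqref{P_avg}, each $g_i$ is increasing and changes curvature exactly once, from concave to convex, and a short computation shows $g_i'(u)\to+\infty$ as $u\to 0$ or $u\to 1$ whenever $\gamma_i>1$; hence $g_i'$ is U-shaped. Consequently $u=0$ is never a maximizer, the convex-region critical point is a local minimizer, and the only candidates are the concave-region critical point $u_i^-(\lambda)$ solving $g_i'(u_i^-)=\lambda$ and the boundary $u=1$. To show the maximizer switches between these exactly once, I would compare $A(\lambda)=g_i(1)-\lambda$ with $B(\lambda)=g_i(u_i^-(\lambda))-\lambda u_i^-(\lambda)$. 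By the envelope theorem $B'(\lambda)=-u_i^-(\lambda)$, so $(B-A)'(\lambda)=1-u_i^-(\lambda)>0$; thus $B-A$ is strictly increasing and there is a single threshold $\lambda_i^*$ at which the maximizer jumps from $u=1$ (for $\lambda\le\lambda_i^*$) to $u_i^-(\lambda)$ (for $\lambda>\lambda_i^*$). In particular $u_i(\lambda)$ is non-increasing with one downward jump of size $1-u_i^-(\lambda_i^*)\in(0,1)$.

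With the coordinates understood, I would assemble $U(\lambda)=\sum_i u_i(\lambda)$, which is non-increasing, ranges over $(0,J]$ as $\lambda$ runs over $(0,\infty)$, and is continuous except for downward jumps of size strictly less than $1$ at the thresholds $\lambda_i^*$. Sorting the thresholds and sweeping through them locates the target $\alpha=Jq \in (0,J]$ within the range of $U$. If $\alpha$ lies in a continuous portion, I solve $U(\lambda)=\alpha$ by a scalar root find and obtain an exact optimum with $q^*=q$ (in the regime of Theorem \ref{Form of Bayes weights} no coordinate saturates and the construction reduces to that closed form). If $\alpha$ falls inside a jump gap at some $\lambda_{i_0}^*$, the two feasible Lagrangian maximizers, setting $u_{i_0}=1$ or $u_{i_0}=u_{i_0}^-(\lambda_{i_0}^*)$, give sums $U^->\alpha>U^+$ with $U^- - U^+<1$; I take whichever is nearer to $\alpha$, producing an achievable $\alpha^*$ with $|\alpha^*-\alpha|\le (U^--U^+)/2<1/2$. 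By the sufficiency theorem this $u$ is globally optimal under $\sum_i u_i=\alpha^*=Jq^*$, whence $|q^*-q|=|\alpha^*-\alpha|/J<1/(2J)$. Coincident thresholds only help: one then chooses a subset of the tied indices to saturate, refining the achievable sums.

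For the cost, each threshold $\lambda_i^*$ is an $O(1)$ scalar root find, so all thresholds cost $O(J)$; sorting them costs $O(J\log J)$; the sweep and the final root find are $O(J)$, so the sort dominates and yields the stated $O(J\log J)$ bound. I expect the main obstacle to be the coordinatewise structural claim in the second step --- the infinite boundary slopes, the single concave-to-convex curvature change, and above all the single-threshold dichotomy proved via $(B-A)'>0$ --- because this is exactly what forces each coordinate to contribute at most a unit-size jump and hence bounds the rounding error uniformly by $1/2$. Everything downstream, namely the monotonicity of $U$, the rounding, and the Lagrangian certification of global optimality, then follows mechanically.
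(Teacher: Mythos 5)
Your proposal follows essentially the same route as the paper's own proof: your change of variables $u_i = qw_i$ is equivalent to the paper's $c_i = \Phi^{-1}(qw_i)$; your two candidate maximizers (interior concave-region critical point versus saturation at $u=1$) are the paper's $c_1$ versus $c=+\infty$; your envelope-theorem computation $(B-A)'(\lambda) = 1-u_i^-(\lambda) > 0$ is exactly the paper's $\partial d/\partial\lambda = 1 - \Phi(c_1) > 0$; and the monotone aggregate with sub-unit jumps, rounding to the nearer achievable sum, tie-handling by subset selection, and Lagrangian-sufficiency certification of global optimality without concavity all mirror the paper's Case 1 / Case 2 argument and yield the same $|q^*-q|\le 1/(2J)$ bound.

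Two places need patching to deliver the theorem as stated. First, the cost accounting: you claim ``the sweep \dots [is] $O(J)$,'' but a single evaluation of $U(\lambda)$ already costs $O(J)$, since every unsaturated coordinate's $u_i^-(\lambda)$ must be recomputed at the new $\lambda$ (there is no $O(1)$ incremental update as $\lambda$ crosses a threshold). A literal linear sweep over all $J$ sorted thresholds therefore costs $O(J^2)$, violating the claimed $O(J\log J)$. Since you have established that $U$ is monotone, the fix is immediate and is what the paper does: binary search over the sorted thresholds, i.e.\ $O(\log J)$ evaluations at $O(J)$ each. Second, ``thus $B-A$ is strictly increasing and there is a single threshold'' needs endpoint sign information, both to guarantee the threshold exists and to provide the bracket for the $O(1)$ root-find you charge per coordinate. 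The large-$\lambda$ side is easy ($B>0$ always, since the Lagrangian term increases from $0$ to its interior local maximum, while $A = g_i(1)-\lambda < 0$ for $\lambda>1$), but the left endpoint --- the value $l(\eta_i,\gamma_i)$ below which the interior critical point ceases to exist --- requires an argument; the paper devotes a separate calculus lemma to showing $B<A$ there (via monotonicity of $x \mapsto x e^{x^2/2}\Phi(x)$ on $(-\infty,0]$), which pins each threshold inside $[l(\eta_i,\gamma_i),1]$ and justifies a bracketed scalar solver. Neither issue changes the structure of your argument, but both are needed for the complexity claim to hold as stated.
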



This result is most relevant for the settings when $\alpha=Jq$, the expected number of errors under the null, is set to at least 1/2. If so, and especially for large $J$, our weights will be optimal for a $q^*$ that is close to $q$. We see from the proof that even for large $q$, $q^*$ often equals $q$. The method also returns the value $q^*$, which the user can inspect. It is then up to the user to choose whether to perform multiple testing adjustment at the original level $q$ or at the new level $q^*$.

We note that the analysis of non-convex optimization problems is challenging in general. It is perhaps remarkable that the non-convex Bayes weights problem admits a nearly exact solution. 

\section{Simulation studies}
\label{sec:sim}

\subsection{Bayes weights are more powerful than competing methods}
\label{sec:compare_methods}

\begin{figure}
  \centering
  \includegraphics[scale=1]{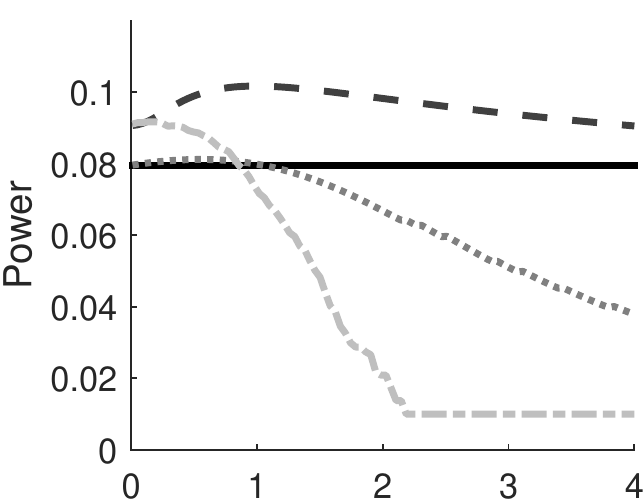}
\caption{Power of four p-value weighting methods as a function of their parameter. Unweighted (solid), Bayes (dashed) as a function of the dispersion $\phi$, exponential (dotted) as a function of $\beta$, and filtering (dot-dashed) as a function of $|M|$. The Spjotvoll weights correspond to the point at the origin $\phi=0$ on the Bayes weights curve. }
\label{fig:power_comparison}
\end{figure}

We present two simulation studies to explore the empirical performance of our method. First we show that Bayes weights increase power more reliably than competing methods.  We compare three methods of p-value weighting: Bayes, exponential, and filtering.

For Bayes weights we multiply the variances by a dispersion factor $\phi$: $\mathcal{N}(\eta_i, \phi\sigma_i^2)$. By changing the dispersion, we test the robustness of our method to mis-specification of the prior variances. This corresponds to the same dispersion variable $\phi$ in our GWAS application given in the next section. The dispersion ranges from 0 to 4. Spjotvoll weights correspond to $\phi= 0$.

Exponential weights with tilt $\beta$ are defined as: $w_i = \exp(\beta|\eta_i|)/c$, where $c = \sum_{i=1}^{J}  \exp(\beta |\eta_i|)$. This weighting scheme was proposed in \citep{roeder2006using}, who recommend $\beta = 2$ as a default. We include the range $\beta \in [0,4]$. As noted by \cite{roeder2006using}, exponential weights are sensitive to large means. To guard against this sensitivity, we truncate the weights larger than $1/q$ and re-distribute their excess weight among the next largest weights. 

Filtering methods test only the most significant effects $\eta_i \le M$, and give them equal weights. This and related methods are known under many names, such as two-stage testing, screening, or the proxy phenotype method \citep{rietveld2014common}. We adopt the name filtering from \cite{bourgon2010independent}, who filter based on independent information in the current data set instead of prior information.
The threshold $M$ ranges from $-4$ to 0. If $|M|$ is large and fewer than $Jq$ hypotheses would be tested, then we instead test the most significant $Jq$.

The simulation is conducted as follows: We generate $J=1000$ random means and variances independently according to $\eta_i \sim \mathcal{N}(0,1)$, $\sigma_i \sim |\mathcal{N}(0,1)|$. We set $q = 10^{-2}$. For any weight vector $w$, we calculate the power as the objective from \eqref{P_avg} divided by $J$, to reflect the average power per test.

The results are shown in Figure \ref{fig:power_comparison}. Each method can improve the power over unweighted testing. However, Bayes weights lead to more power than the other methods. The best power is attained when the dispersion $\phi=1$, but good power is reached even when $\phi$ is not 1. Our weights are robust to mis-specifying the dispersion. 

Note, in particular, that taking uncertainty into account helps. Spjotvoll weights, which assume fixed and known effects, and are shown on the figure as regularized weights with $\phi = 0$, have less power than Bayes weights with positive $\phi$, for a wide range of $\phi$.

The remaining two methods, filtering and exponential weights, have disadvantages. While filtering leads to power gain for a thresholding parameter $M \lesssim 3/4$, it also leads to a substantial power loss for $M>1$. For sufficiently large $M$ the power equals $q$, because only the top $Jq$ hypotheses are selected. Further, it is a significant disadvantage that there is no principled way to choose $M$ a priori without additional assumptions.

Similarly, exponential weighting leads to at most a small gain in power, and usually leads to a loss. There also appears to be no simple, principled way to choose $\beta$ a priori.

We conclude that Bayes weights are quite insensitive to tuning and have uniformly good power. In contrast, exponential weighting and filtering are relatively sensitive and their power can drop substantially. Therefore, Bayes weights increase power more reliably than competing methods.

\subsection{Bayes weights have a worst-case advantage}
\label{sim_robust}

We show that Bayes weights have a worst-case advantage compared to Spjotvoll weights. We use the sparse means model: we  generate $J = 1000$ means $\eta_i$, such that their distribution is $\eta \sim \pi_0 \delta_m + \pi_1 \delta_M$, where $m = -10^{-3}$ and $M = -2$. We set $q = 10^{-2}$ and we vary $\pi_1$ from 0 to 0.1. We set all $\sigma_i=\sigma$.

We consider two values of $\sigma$: 0 and 1. Spjotvoll is optimal for $\sigma=0$, while Bayes weights with $\sigma=1$ are optimal for 1. We evaluate these weighting schemes by calculating the objective \emph{that they do not maximize}: 
the average power \eqref{P_avg} for Spjotvoll and the deterministic power \eqref{P_d} for Bayes weighting. We also compute the power of the unweighted Bonferroni method. 

\begin{figure}
\centering
\begin{subfigure}{.5\textwidth}
 \centering
\includegraphics[scale=1.2]{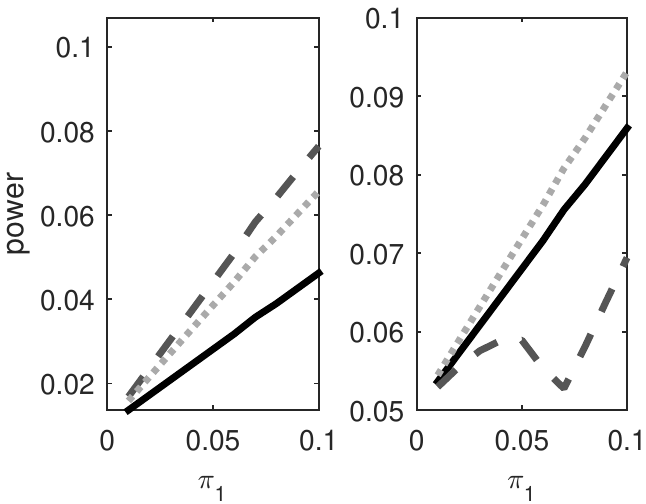}
\caption{Power comparison for sparse means.}

\end{subfigure}%
\begin{subfigure}{.5\textwidth}
\centering
\includegraphics[scale=1.2]{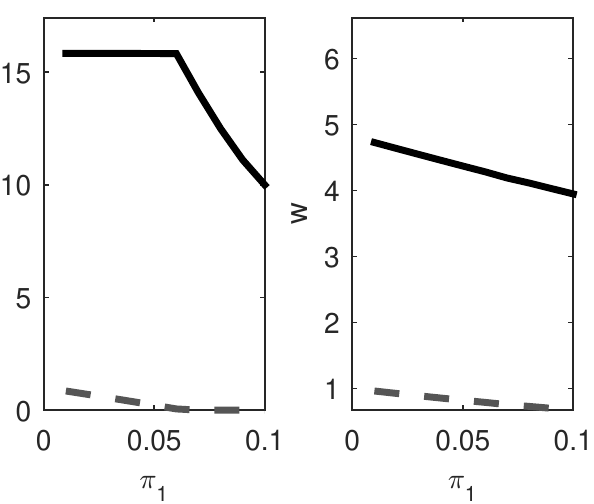}
\caption{Weights of the two classes.}
\end{subfigure}
\caption{(a) Deterministic (left) and average (right) power as a function of the proportion of large means $\pi_1$. Three methods are compared: unweighted (solid), Spjotvoll (dashed), Bayes (dotted). (b) The weights of the two classes, large (solid) and small (dashed), for Spjotvoll (left) and Bayes (right), as a function of the proportion of large means $\pi_1$. }
\label{pow_sparse}
\end{figure}

The results are in Figure \ref{pow_sparse}. Bayes weights lose only a little compared to the optimal Spjotvoll ((a) left). In contrast, Spjotvoll loses a lot compared to Bayes ((a) right). Bayes maximizes the worst-case power, thus showing a maximin property.

Spjotvoll weights show a marked drop in power near $\pi_1 = 0.07$, as shown by its non-monotonic power curve in Figure \ref{pow_sparse} (a)). To understand this, we plot the two weighting schemes in Figure \ref{pow_sparse} (b). Since there are only two classes, the weights also take two values. We see that the Spjotvoll weights are more extreme than the Bayes ones. They start putting weight \emph{equal to zero} on the small means near $\pi_1=0.07$, which appears to lead to power loss.

\section{Application to Genome-Wide Association Studies}
\label{sec:igwas}

\subsection{Review of GWAS}
We adapt our framework to genome-wide association studies, relying on basic notions of quantitative genetics \citep[see e.g.][]{lynch1998genetics}. Our method applied to this problem is called iGWAS in our forthcoming application to human longevity. This section presents in detail the methodology for that application, while also illustrating the steps to use our framework for specific problems. 

Consider a model for GWAS in which we study a quantitative trait $y$ in a population. Our goal is to understand the effects of single nucleotide polymorphisms (SNPs) $g_1, g_2, \ldots, g_J$ on the trait. We assume $y$ has mean 0 and known variance, and $g_i$ denotes the centered minor allele count of variant $i$ for an individual. We rely on the linear model for the effect of the $i$-th variant on the trait:  $y = g_i \beta_i + \varepsilon_i$.

We will show how our question can be framed in the Gaussian means model of hypothesis testing. In the above model $y$ is the phenotype of a randomly sampled individual from the population. Accordingly, $g_i$ is random, $\beta_i$ is a fixed unknown constant, and $\varepsilon_i$ is the residual error. This error is a mean-zero random variable independent of $g_i$, with variance $\sigma_i^2$. 

Suppose we observe a sample of $N$ independent and identically distributed observations from this model. We use the standard linear regression estimate $\hat{\beta_i}$, which for large sample size has an approximate distribution $N^{1/2} \hat{ \beta_i} \stackrel{.}{\sim} \mathcal{N}\{N^{1/2}\beta_i , \sigma_i^2/\mathrm{var}(g_i)\}$.  We can standardize if we divide by $\tau_i$, where $\tau_i^2 = \sigma_i^2/\mathrm{var}(g_i)$ is the variance of $N^{1/2}\hat{\beta_i}$.

With these steps, we have framed our problem in the Gaussian means model. Denoting $T_i = N^{1/2} \hat{\beta_i}/\tau_i$, $ \mu_i =N^{1/2} \beta_i/\tau_i$, we have $T_i \stackrel{.}{\sim}\mathcal{N}(\mu_i,1)$, which is the required form. Denote also the standardized effect size  $\nu_i =\beta_i/\tau_i$, which will be of key importance.

\subsection{Prior Information}

Now we show how to use prior information. Assume we also have a \emph{prior} trait $y_0$ measured independently on a different, independent sample from the same population. If our assumptions also hold for $y_0$, we can write $y_0 = g_i \beta_{0i} + \varepsilon_{0i}$. Here $\beta_{0i}$ is a fixed unknown constant, and $\varepsilon_{0i}$ is random. Suppose we have independent samples of size $N_i$ and $N_{0i}$ for the two traits. If we define $T_{0i},\nu_{0i}$ by analogy to their definitions for $y$, we can write $T_{0i} \sim \mathcal{N}(N_{0i}^{1/2}\nu_{0i}, 1)$.

We model the relatedness between the two traits as a relation between the standardized effect sizes $\nu$, which do not depend on the sample size. If the two traits are closely related, the first order approximation is equality: $\nu_i = \nu_{0i}$. This simple model captures the pleiotropy between the two traits \citep[e.g. ][]{solovieff2013pleiotropy}. 

The final step is to compute the distribution of $\nu_i$ given the prior data $T_{0i}$. For this we need to choose a prior for $\nu_i$, and for simplicity we will use a flat prior. 

We now have all ingredients to apply the model for Gaussian hypothesis testing with uncertain information. Specifically, we have $\mu_i \sim \mathcal{N}(\eta_i,\sigma_i^2)$, where $\mu_i = N_i^{1/2}\nu_i$, $\eta_i  = (N_i/N_{0i})^{1/2}T_{0i}$, and $\sigma_i^2 = N_i/N_{0i}$. 

The uncertainty in $T_{0i}$ may be different from 1, for instance larger than 1 due to overdispersion. In addition, overdispersion is one way to weaken the first order approximation assumption. To allow for this, we recall the dispersion parameter $\phi$ used in our simulation. We model the prior data as $T_{0i} \sim \mathcal{N}(N_{0i}^{1/2}\nu_{0i}, \phi)$. Then the variance  $\sigma_i^2 = \phi N_i/N_{0i}$. The default is $\phi = 1$. Finally, we compute Bayes weights $w_i$ with parameters $q$, $(\eta_i,\sigma_i^2)$, and run weighted Bonferroni on the current p-values. This fully specifies the method. For the reader's convenience, the method is summarized in Algorithm \eqref{iGWAS}.


\begin{algorithm}
\caption{Multiple Testing with Bayes weights in GWAS}\label{iGWAS}
\vspace*{-12pt}
\begin{tabbing}
   \enspace $T_{0i} \gets $ prior effect sizes for $i=1,\ldots,J$\\
   \enspace $N_{0i}, N_i \gets $ prior and current sample sizes \\
   \enspace $P_i \gets $ current p-values \\
   \enspace $q \gets $ significance threshold \\
   \enspace $\phi \gets$ dispersion (default $\phi=1$) \\
   \enspace Set the prior means and variances: $\eta_i  = (N_i/N_{0i})^{1/2}T_{0i}$, \, $\sigma_i^2 = \phi N_i/N_{0i}$ \\
 \enspace Compute Bayes weights $w_i$, defined via \eqref{P_avg},  with parameters $q$, $(\eta_i,\sigma_i^2)$\\
  \enspace Return indices $i$ such that $P_i \le qw_i$\\

\end{tabbing}
\end{algorithm}

\subsection{Practical remarks}
It is important that we retain type I error control as soon as we have valid p-values, even if the modelling assumptions fail.  Common deviations from our model are: \textbf{(1)} GWAS summary data sometimes only has the magnitude of the effects, and not their sign. In this case we have two choices. We may assume that the directions of effects are the same, and do a one-tailed test of the current effect in the prior direction. Alternatively, we can do a two-tailed test by including the 2 tests $(\eta_i,\sigma_i^2)$ and $(-\eta_i,\sigma_i^2)$ for each $i$, for a total of $2J$ tests. \textbf{(2)} When the prior and current trait are not both quantitative, and one is binary, the model $\nu_i=\nu_{0i}$ should be re-examined. It is still convenient as a first approximation. \textbf{(3)} Benjamini-Hochberg may be used instead of Bonferroni, with any weights summing to $J$, for increased power \citep{genovese2006false}.
 
\section{Data Analysis}
\label{sec:data_analysis}

\subsection{Data Sources}
\label{detailed_data}

We illustrate our method by analyzing 5 publicly available genome-wide association studies of quantitative or binary traits. We use the association p-values, which are available for 500,000 to 2.5 million SNPs. The five studies are: CARDIoGRAM and C4D for coronary artery disease \citep{schunkert2011large, coronary2011genome}, and one each for the kidney trait estimated glomerular filtration rate (eGFR) creatinine \citep{kottgen2010new}, blood lipids  \citep{teslovich2010biological}, and schizophrenia \citep{schizophrenia2011genome}. CARDIoGRAM and C4D include non-overlapping samples. A detailed description appears in the Supplementary material.

\subsection{Specific Pairs}
\label{specific+pairs}

We analyze three pairs of data sets, with specific motivation for each. First, we use CARDIoGRAM as prior information for C4D. This is a `positive control' for our method, since both studies measure the same phenotype, coronary artery disease. Therefore the weights should increase power. We choose C4D as target because it has smaller sample size; hence prior information may increase power more substantially.

Second, we use the blood lipids GWAS as prior information for schizophrenia. \cite{andreassen2013improved} showed improved power with this pair. They developed and controlled the Bayesian conditional false discovery rate. This is not known to control a frequentist criterion. Our goal was to evaluate the power improvement using a frequentist method. As \cite{andreassen2013improved} noted, there is a small overlap between the controls of the two GWAS (Section \ref{detailed_data}). 

Third, we used the eGFR creatinine GWAS as prior information for the C4D coronary artery disease study. Heart disease and renal disease are comorbid \cite[eg.][]{silverberg2004association}, so it is possible that this may improve power. Here the hypothesized improvement is not based on entirely rigorous arguments.

\subsection{Methods compared}
\label{Methods_compared}
We run weighted Bonferroni multiple testing for each of the 5 weighting schemes. The prior data is $T_{0i}  = \Phi^{-1}(P_{0i}/2)$, where $P_{0i}$ is the $i$-th prior p-value. The family-wise error rate is controlled at $q=0.05$. 

The first four methods are: unweighted Bonferroni, where all weights equal 1; Spjotvoll weights with parameters $\mu_i  =(N_i /N_{0i})^{1/2}T_{0i}$;  Bayes weights defined in Section \ref{sec:igwas}, with dispersion $\phi = 0.1, 1$, and  $10$; and exponential weighting \citep{roeder2006using} with tilt $\beta = 1,2$, and 4, introduced in Section \ref{sec:compare_methods}.

The fifth and last method is filtering, which selects the smallest p-values in the prior study, and tests their SNPs in the current study.  We use three p-value thresholds $P < 10^{-2},  10^{-4},  10^{-6}$. \cite{rietveld2014common} propose a method to choose the optimal p-value threshold for filtering. This needs the genotypic correlation between the two traits and the additive heritability of the current trait. For complex traits these parameters are usually estimated with a large uncertainty. Substantial domain expertise is required to choose the right parameter. 

\subsection{Additional details}

We prune the significant SNPs for linkage disequilibrium (LD) using the DistiLD database \citep{palleja2012distild}. We LD prune the significant SNPs for each method by selecting one SNP from each LD block.  Our data analysis pipeline is available from the first author.

 We compute a score $s_{m(p)d}$ for each method $m$ with parameters $p$, on each data set $d$. This is defined as +1 if the method increases the number of hits compared to unweighted, 0 if it leaves it unchanged, and -1 otherwise. The score $s_{m(p)}$ of a method $m$ with parameters $p$ is the sum of scores across data sets. The total $s_m$ of the method $m$ is the sum of scores $s_{m(p)}$ across parameters.

\subsection{Results}
\begin{table}[h]
\caption{Number of significant SNPs of five methods on three examples. Top: Results pruned for linkage disequilibrium (LD). Middle: results without LD pruning. Bottom: The score of each method. The methods compared are unweighted (Un); Spjotvoll (Spjot); Bayes with $\phi=0\cdot1,1,10$; exponential (Exp) with $\beta=1,2,4$ and filtering (Filter) with $-\log(P)=2,4,6$. CG stands for CARDIoGRAM}
\begin{tabular}{rrr@{\hskip 0.35in}rrr@{\hskip 0.35in}rrr@{\hskip 0.35in}rrr}
                     & Un & Spjot & \multicolumn{3}{c}{Bayes($\phi$)} & \multicolumn{3}{c}{Exp($\beta$)} & \multicolumn{3}{c}{Filter($-\log P$)} \\
Parameter                     &    &       & $0\cdot 1$       & 1       & 10       & 1            & 2            & 4           & 2  &4  & 6 \\
\\  
\multicolumn{12}{l}{\textbf{LD Pruned}}      
                                                                                                                   \\
CG $\to$ C4D &4    & 11      &  10                &  8       &    4      &          4    &    5          &  4           &     10       &    10        &  6         \\
Lipids  $\to$ SCZ    &  4  &   1    &     1   & 1          &     5    &     1     &       0       &       0       &    2         &       2     &      2                \\
eGFRcrea $\to$ C4D    & 4   &  2     &    2              &   4      &    4      &      4        &    5          &       4      &        1    &    0        &   1        \\
\\  
\multicolumn{12}{l}{\textbf{Unpruned}}                                                                                                                               \\
CG$\to$ C4D  & 29   &   45    &        44          &   39      &   29       &     32         &   34           &   27          &   40         &    48        &    34      \\
Lipids  $\to$ SCZ     & 116   &    214   &    214              &   223      &    123      &     92         &    0          &      0       &    217        &   96         &    39       \\
eGFRcrea $\to$ C4D          & 29   &   18    &   18               &     23    &    29      &     29         &   28           &   19          &   1         & 0           &    1       \\
\\  
\textbf{Scoring}              &    &       &                  &         &          &              &              &             &            &            &           \\
Score                & 0   &  0     &      0            &  1       &        1  &      0        &    0         &   $-$1           &     0       &     $-$1        &     $-$1       \\
Total                &  0  &   0    &    \multicolumn{3}{c}{sum  =  2}          &        \multicolumn{3}{c}{sum  =  $-$1}         &      \multicolumn{3}{c}{sum  =  $-$2}           \\
                     &    &       &                  &         &          &              &              &             &            &            &          
\end{tabular}
\label{data_analysis_results}
\end{table}
The results of our data analysis are presented in Table \ref{data_analysis_results}. This table has the number of significant SNPs on the pairs of GWAS data in \ref{specific+pairs} using the weights in \ref{Methods_compared}. We also present the LD pruned results, which are a proxy for the the number of independent loci found.

The results are somewhat inconclusive. On the positive control example, all methods except exponential weighting improve power. Spjotvoll weighting and filtering have the largest number of SNPs. On the blood lipids example, methods generally lose power for LD pruned SNPs (except Bayes weights with $\phi=10$); and methods can both lose and gain power for non-LD pruned SNPs (except Bayes weights which uniformly improves power). On the other hand, for the eGFR creatinine example, exponential weights show the best behavior. We also see that the default $\phi=1$ is never worse that both unweighted and Spjotvoll, and for the unpruned lipids example it is better. 

If we allow for tuning of parameters, Bayes weights show a good performance. They are either first or second in all examples, and other methods rank lower. However, since we don't have a principled way to tune the parameters, we do not pursue this way of evaluation.

Instead, we look at the scoring method for evaluation. The only method with a positive score is our Bayes weights ($\phi=1$ and 10). The total score, summed across parameter settings, is also only positive for Bayes testing. This shows some promise for our method. However, from this analysis alone we cannot establish conclusively the relative merits of the methods. In future work it will be necessary to evaluate p-value weighting methods on more data sets.

\section{Acknowledgements}

This work was supported by grants from the AFAR/EMF and the NIH/NIA (AG025941), and NSF grant DMS-1407397. The first author thanks David Donoho for support.

\section*{Supplementary material}
\label{SM}

The supplementary material is organized as follows: Accessing the software implementation of our method is described in Section \ref{software}. In Section \ref{proofs} we give the proofs of all mathematical claims from the paper. In addition, in Subsection \ref{method_for_computing_weights} we give Algorithm \ref{weights_alg} to compute our weights. In Section \ref{detailed_data} describe our data  sources in detail.

\section{Software}\label{software}

\subsection{End-user software}

We provide R and MATLAB implementations of the methods developed in this paper. They can be obtained from public git repositories or directly from the first author. The Matlab implementation is available from \url{https://github.com/edgardobriban/pvalue_weighting_matlab}. The R implementation is available from \url{https://github.com/edgardobriban/pvalue_weighting_r}. An R package is under development.

\subsection{Reproducibility}
All computational results and analyses of this paper have been performed in a reproducible way. To reproduce the simulation results and figures, we provide the source code in the above-mentioned Matlab package at \url{https://github.com/edgardobriban/pvalue_weighting_matlab}. To reproduce the data analyses, a separate repository has been created, and is publically available at \url{https://bitbucket.org/edgardobriban/pvalue-weighting-gwas}.

\section{Proofs}\label{proofs}

\subsection{Proof of Proposition 1 - Sparse Means}
\label{Opt_Sparse_Pf}

\begin{proof}
The objective function we need to maximize in $w_i$ is

$$
\sum_{i=1}^{J} \Phi\{\Phi^{-1}(qw_i)-\mu_i\}.
$$

We know that out of the $J$ means $\mu_i$, $\pi_0J$ are equal to $0$. For each of these, the summand in the objective simplifies to $qw_i$. For the remaining $\pi_1J$ means, the same convex objective $g(w) = \Phi\{\Phi^{-1}(qw_i)-\mu_i\}$ is summed, and the objective becomes:

$$
q\sum_{i:\mu_i=0}w_i+ \sum_{i:\mu_i=M}g(w_i).
$$

Now, since $g$ is convex, we have  $\sum_{i:\mu_i=M}g(w_i) \le Mg(\bar w)$, where $\bar w$ is the mean of the weights of large effects $\bar w = (J\pi_1)^{-1} (\sum_{i:\mu_i=M}w_i) $. Hence, there is a set of optimal weights that take equal values for equal means. This proves the first claim in the proposition. 

If we call $w_0, w_1$ the weights for $0$ and $M$, the optimization problem takes the form
\begin{equation*}
 \max_{w_0, w_1 \in [0,1/q]} \mbox{  } \pi_0 w_0 + \pi_1  \Phi\left(\Phi^{-1}(qw_i) - M \right) \mbox{   s.t. }   \pi_0 w_0 + \pi_1  w_1 = 1.
\end{equation*}
 
Next, recall that $qw_i \le 1$ and introduce a new set of variables $c_i := \Phi^{-1}(qw_i)$.  If $qw_i=1$, then $c_i$ will take the value $+\infty$ in the extended real number system $\overline{\mathbb{R}}=\mathbb{R}\cup\{\infty\}$. All our calculations respect the rules of the extended number system, so this will not cause any problems. For instance, $\Phi(\infty)$ is defined as 1 by continuity.

Then the optimization problem becomes
\begin{equation*}
 \max_{c_0, c_1 \in \overline{\mathbb{R}}} \mbox{  } \pi_0 \Phi\left(c_0\right) + \pi_1  \Phi\left(c_1 - M \right) \mbox{   s.t. }   \pi_0 \Phi\left(c_0\right) + \pi_1  \Phi\left(c_1 \right) = q.
\end{equation*}

Substituting the second equation we find that we need to maximize
$$
q + \pi_1\left( \Phi\left(c_1-M\right) - \Phi\left(c_1\right)\right)
$$
subject to $\pi_1 \Phi\left(c_1 \right) \le q$. Now it is easy calculus to check that the function $ c \to  \Phi\left(c-M\right) - \Phi\left(c\right)$ is strictly increasing on $(-\infty,M/2]$ and strictly decreasing on $[M/2,\infty)$. Therefore, if $M/2$ is feasible, i.e. $\pi_1 \Phi\left(M/2 \right) \le q$, then the maximum is achieved at $c_1 = M/2$. This leads to the claimed formula for $w_1 = \Phi(c_1)/q$ and the power $p^*$.

If $M/2$ is not feasible, then the maximum is achieved at the largest feasible value $c_1$, defined by $\pi_1\Phi\left(c_1 \right) = q$. This again leads to the claimed formulas.
\end{proof}

\subsection{Proof of Theorem 1 - Optimal weights for small $q$}
\label{Opt_Weights_Pf}

Similarly to Section \ref{Opt_Sparse_Pf}, we introduce the new set of variables $c_i := \Phi^{-1}(qw_i)$. We can equivalently rewrite our problem as
\begin{equation*}
 \max_{c\in \overline{\mathbb{R}}^J} \mbox{  } \sum_{i=1}^{J} \Phi\left(\frac{c_i - \eta_i}{\gamma_i}\right)  \mbox{s.t. }  \sum_{i=1}^{J} \Phi(c_i) = qJ. 
\end{equation*}

Clearly it is enough to find a scalar dual variable $\lambda$ such that we can solve the Lagrangian problem
\begin{equation*}
 \max_{c\in \overline{\mathbb{R}}^J} \mbox{  } \sum_{i=1}^{J} \Phi\left(\frac{c_i - \eta_i}{\gamma_i}\right) - \lambda \left( \sum_{i=1}^{J} \Phi(c_i) - qJ \right)  \mbox{s.t. }  \sum_{i=1}^{J} \Phi(c_i) = qJ. 
\end{equation*}

The strategy is to study this penalized objective for each fixed $\lambda$, find maximizers $c_i(\lambda)$, and then find a suitable $\lambda$ to make the constraint hold.

For this we introduce a function $f(c;\eta,\gamma)$ that is the generic term in the Lagrangian function of the problem:

$$
f(c;\eta,\gamma) = \Phi\left(\frac{c - \eta}{\gamma}\right)  - \lambda\Phi(c).
$$

We find the maximizer of $f$ in the following lemma. 
\begin{lemma} If $\lambda \ge 1$, then the maximum of $f$ is reached at 

\begin{equation}
\label{root}
c_1(\eta,\gamma; \lambda) = - \frac{\eta + \gamma \sqrt{
\strut\eta^2 + 2\left(\gamma^2-1\right)\log(\gamma\lambda)}}{\gamma^2-1}.
\end{equation}
\label{explicit_maximizer}
\end{lemma}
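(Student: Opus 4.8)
The plan is to locate the critical points of $f$ by elementary calculus and then decide which one is the global maximizer using the boundary behaviour on the extended line; throughout, the standing assumption is $\gamma > 1$ (i.e.\ $\sigma > 0$), which is what makes $f$ non-concave and forces the division by $\gamma^2 - 1$. Writing $\phi = \Phi'$ for the standard normal density, I would first differentiate to get $f'(c) = \gamma^{-1}\phi\{(c-\eta)/\gamma\} - \lambda\phi(c)$. Setting $f'(c)=0$, cancelling the common factor $(2\pi)^{-1/2}$, and taking logarithms converts the stationarity condition into a quadratic: $(\gamma^2-1)c^2 + 2\eta c - \eta^2 - 2\gamma^2\log(\gamma\lambda) = 0$.

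Next I would solve this quadratic. After factoring $4\gamma^2$ out of the discriminant it collapses to $4\gamma^2\{\eta^2 + 2(\gamma^2-1)\log(\gamma\lambda)\}$, so the two stationary points are $\{-\eta \pm \gamma[\eta^2 + 2(\gamma^2-1)\log(\gamma\lambda)]^{1/2}\}/(\gamma^2-1)$. The smaller root (the $-$ sign) is exactly the claimed $c_1(\eta,\gamma;\lambda)$. Here the hypothesis $\lambda \ge 1$ enters for the first time: together with $\gamma > 1$ it gives $\gamma\lambda > 1$, hence $\log(\gamma\lambda) > 0$, so the radicand is strictly positive and the two roots are real and distinct; I will call them $c_1 < c_2$.

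To identify $c_1$ as a maximizer I would study the sign of $f'$. A short computation shows $f'(c) > 0$ if and only if the upward-opening parabola $h(c) = \{(\gamma^2-1)c^2 + 2\eta c - \eta^2\}/(2\gamma^2)$ exceeds $\log(\gamma\lambda)$. Since $h$ equals $\log(\gamma\lambda)$ precisely at $c_1$ and $c_2$ and opens upward, we get $f' > 0$ on $(-\infty,c_1)$, $f' < 0$ on $(c_1,c_2)$, and $f' > 0$ on $(c_2,\infty)$. Thus $f$ climbs to a local maximum at $c_1$, falls to a local minimum at $c_2$, and climbs again, so $c_1$ is the only interior candidate for the global maximum.

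Finally I would rule out the endpoints on $\overline{\mathbb{R}}$. Using $\Phi(-\infty)=0$ and $\Phi(+\infty)=1$ one checks $\lim_{c\to-\infty} f(c) = 0$ and $\lim_{c\to+\infty} f(c) = 1-\lambda$. Because $f$ is strictly increasing on $(-\infty,c_1)$ we have $f(c_1) > 0$, while $\lambda \ge 1$ gives the right-endpoint value $1-\lambda \le 0 < f(c_1)$. Hence $c_1$ dominates the other stationary point $c_2$ (a local minimum) and the endpoint at $+\infty$, so $c_1$ is the global maximizer. I expect the main obstacle to be precisely this last global-optimality step rather than the calculus: $f$ is non-concave with two stationary points and a genuinely competing endpoint at $+\infty$, and the assumption $\lambda \ge 1$ is exactly what is needed both to make the stationary points real and to knock out the $+\infty$ endpoint.
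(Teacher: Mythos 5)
Your proof is correct and takes essentially the same route as the paper's: differentiate $f$, reduce the sign of $f'$ to a quadratic inequality in $c$ whose smaller root is $c_1$ (so $f$ increases on $(-\infty,c_1]$, decreases on $[c_1,c_2]$, increases on $[c_2,\infty)$), and then use the boundary values $f(-\infty)=0$ and $f(+\infty)=1-\lambda\le 0$ to conclude that $c_1$ is the global maximizer. The only difference is one of detail: you make explicit why $\lambda\ge 1$ forces the radicand to be strictly positive and why the endpoint at $+\infty$ cannot compete with $f(c_1)>0$, points the paper states more tersely.
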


This function was called $c$ in the main paper. Here we call it $c_1$ to distinguish it from the dummy variable $c$. 

\begin{proof}
Denoting the standard normal density by $\varphi$, the derivative of $f$ with respect to $c$ is:

$$
\frac{\partial f(c;\eta,\gamma)}{\partial c} = \frac{1}{\gamma}\varphi\left(\frac{c - \eta}{\gamma}\right)  - \lambda\varphi(c).
$$

Thus  $\partial f/\partial c\ge 0$ if and only if
\begin{equation}
\label{quad}
c^2 - \left(\frac{c-\eta}{\gamma}\right)^2 \ge 2\log(\gamma \lambda).
\end{equation}

By assumption $\gamma > 1$. We have a quadratic inequality for $c$, and the associated quadratic equation has two roots $c_1, c_2$. If $\lambda\ge 1$ both roots are real: $c_1 \le c_2$. Indeed it is easy to check that the smaller root is given in \eqref{root}. Then $f$ is increasing on $(-\infty, c_1]$ and $[c_2,\infty)$, and decreasing on $[c_1,c_2]$. Further, the limit of $f$ at $-\infty$ is clearly 0, while at $\infty$ it is $1-\lambda$. Therefore, if $\lambda \ge 1$,  $c_1$ is a global maximum of $f$.  
\end{proof}

Therefore, if $\lambda \ge 1$, the $i$-th summand in the Lagrangian is maximized at $c_1(\eta_i, \gamma_i, \lambda)$. To solve our problem it is enough to find a $\lambda \ge 1$ such that $
\sum_{i=1}^{J} \Phi(c_1(\eta_i, \gamma_i,\lambda)) = Jq. $

Such a $\lambda$ exists by condition \eqref{dual_variable_existence}, as follows: $c_1$ is a decreasing function of $\lambda$, thus the constraint $\lambda \ge 1$ is equivalent  to the sum of $\Phi(c_1)$ evaluated at $\lambda=1$ to be greater than $Jq$. This is exactly the condition we assumed in \eqref{dual_variable_existence}. This finishes the proof.

\subsection{Proof of Proposition 2  - Recovering the Spjotvoll weights}
\label{zero_noise_limit}

Let $R^2:= \left(\gamma^2-1\right)\left(\eta^2+2\gamma^2\log(\gamma\lambda)\right)$. Then $c_1(\eta,\gamma; \lambda)$ equals after some calculation:
\begin{align*}
c_1 &= -\frac{\eta+\sqrt{\eta^2+R^2}}{\gamma^2-1} = - \frac{R^2}{\sqrt{\eta^2+R^2}-\eta}\frac{1}{\gamma^2-1} 
= -\frac{\eta^2+2\gamma^2\log(\gamma\lambda)}{\sqrt{\eta^2+R^2}-\eta}.
\end{align*}

As $\sigma \to 0$,  we have $\gamma \to 1$, $\gamma^2\log(\gamma\lambda) \to \log(\lambda)$, and $R^2 \to 0$. For $\eta <0$ this shows that the limit of $c_1$ is
\begin{align*}
 -\frac{\eta^2+2\log(\lambda)}{|\eta|-\eta} =  \frac{\eta^2+2\log(\lambda)}{2\eta} .
\end{align*}

Recall that the Spjotvoll weight is
\begin{equation*}
w(\mu) = \Phi\left(\frac{\mu}{2}+ \frac{c}{\mu}\right)/q.
\end{equation*}

This shows that the we recover the Spjotvoll weights in the limit as $\sigma\to0$.

\subsection{Proof of Proposition 3 - Simple sufficient condition}
\label{simple_condition_proof}

We start by showing that for $\lambda=1$
$$
c_1(\eta,\gamma; \lambda) \ge -\frac{\eta^2+\gamma^2\log(\gamma^2)}{2|\eta|}.
$$

Indeed, for negative $\eta<0$ the left hand side equals: 
$$
- \frac{\eta + \gamma \sqrt{
\strut\eta^2 + \left(\gamma^2-1\right)\log(\gamma^2)}}{\gamma^2-1}
= - \frac{\eta^2 + \gamma^2\log(\gamma^2)}{|\eta| + \gamma \sqrt{
\strut\eta^2 + \left(\gamma^2-1\right)\log(\gamma^2)}}
$$

and now the inequality follows immediately, because $\gamma \ge 1$, so 
$$
\gamma \sqrt{
\strut\eta^2 + \left(\gamma^2-1\right)\log(\gamma^2)} \ge \gamma \sqrt{
\eta^2} \ge |\eta|.
$$

Thus the condition \eqref{dual_variable_existence} of the theorem is satisfied if the following more explicit inequality holds.
\begin{equation}
 \sum_{i=1}^{J} \Phi\left(-\frac{\eta_i^2+\gamma_i^2\log(\gamma_i^2)}{2|\eta_i|} \right) \ge Jq. 
 \label{simpler_inequality}
\end{equation}

For this to hold it is clearly sufficient that there are $M$ distinct indices such that (recall $\alpha = Jq$)
\begin{equation}
\Phi\left(-\frac{\eta_i^2+\gamma_i^2\log(\gamma_i^2)}{2|\eta_i|} \right) \ge \alpha/M.
\label{elementwise}
\end{equation}

The above inequality is equivalent to 

$$\frac{|\eta_i|}{2}+\frac{\gamma_i^2\log(\gamma_i^2)}{2|\eta_i|} \le |z_{\alpha/M}|.$$

By assumption there are $M$ indices such that $|\eta_i|/2 \le |z_{\alpha/M}|/2$ and $\gamma_i^2\log(\gamma_i^2) / (2|\eta_i|) \le$ $ |z_{\alpha/M}|/2$. For these indices \eqref{elementwise} is true, so that \eqref{simpler_inequality} holds. This shows that the original condition \eqref{dual_variable_existence} holds, finishing the proof.

\subsection{Proof of Theorem 2 - Optimal weights in the general case}
\label{complete_sol}

The proof builds on Theorem 1, but requires more detailed analysis. There are two parts: understanding the monotonicity of the generic term in the Lagrangian, then using it to find the optimal weights.

\subsubsection{Monotonicity of the generic term}

We saw in Lemma \ref{explicit_maximizer} that if $\lambda\ge1$, $f$ is maximized at $c_1$. We also saw in the proof that there are two cases: the roots of \eqref{quad} are either real or complex. If the roots are complex, then $f$ is increasing and the supremum is at $+\infty$. The two roots are complex when the discriminant of the quadratic equation is negative:
$$
\eta^2 + 2\left(\gamma^2-1\right)\log(\gamma\lambda)  < 0.
$$

This inequality is equivalent to the following bound for $\lambda$, in terms of a new function $l(\eta,\gamma)$:
\begin{equation}
 \lambda < l(\eta,\gamma) := \frac{1}{\gamma} \exp\left(-\frac{\eta^2}{2(\gamma^2-1)}\right).
\label{l_def}
\end{equation}

Therefore if $\lambda  < l(\eta,\gamma)$, then the supremum of $f$ is at $c = +\infty$. Otherwise there are two candidates for the supremum: $c = +\infty$ and $c_1$.  We want to compare the two candidate maxima. 

 Denote the limit of $f$ at infinity by $p_{\infty}(\lambda) = p_{\infty}(\lambda;\eta,\gamma)$ and the value of $f$ at $c_1$, when defined, as $p_1(\lambda)=p_1(\lambda;\eta,\gamma) = f(c_1)$.  Let the difference between the two extrema be $d(\lambda; \eta,\gamma) = p_1(\lambda;\eta,\gamma) - p_{\infty}(\lambda,\eta,\gamma)$. In the following lemma we find the maximizer of $f$ as a function of $\lambda$.
 
 \begin{lemma}
There is a unique value $k(\eta,\gamma)$  in the interval $[l(\eta,\gamma),1]$ such that $
d(k(\eta,\gamma);\eta,\gamma) = 0.$
For $\lambda < k(\eta,\gamma)$, the supremum of $f$ is at $+\infty$, else it is at $c_1$. For the special value $\lambda = k(\eta,\gamma)$ both values are equal.

 \end{lemma}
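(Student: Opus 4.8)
The plan is to reduce the lemma to a one-dimensional monotonicity statement about the function $d(\lambda) = p_1(\lambda) - p_\infty(\lambda)$ on the interval $[l(\eta,\gamma),1]$. When $\lambda \ge l(\eta,\gamma)$ the quadratic \eqref{quad} has real roots and, as in the proof of Lemma \ref{explicit_maximizer}, $f$ increases on $(-\infty,c_1]$, decreases on $[c_1,c_2]$, and increases again on $[c_2,\infty)$; hence the only two candidates for the supremum are the finite local maximum $c_1$ and the limit at $+\infty$. Since $\Phi((c-\eta)/\gamma)$ and $\Phi(c)$ both tend to $1$, I first record that $p_\infty(\lambda) = 1 - \lambda$, so that $d(\lambda) = f(c_1) - (1-\lambda)$, and the supremum of $f$ is attained at $c_1$ precisely when $d(\lambda) \ge 0$. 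It therefore suffices to show that $d$ is continuous and strictly increasing on $[l,1]$ with $d(l) < 0 \le d(1)$; the intermediate value theorem then produces a unique root $k(\eta,\gamma)$ with the stated sup characterization.

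The key step is strict monotonicity, which I would obtain from the envelope theorem. Because $c_1(\lambda)$ is a critical point of $c \mapsto f(c;\eta,\gamma)$, the partial derivative $\partial f/\partial c$ vanishes at $c_1$, so when differentiating $p_1(\lambda) = f(c_1(\lambda);\lambda)$ the term coming from $dc_1/d\lambda$ drops out and leaves $p_1'(\lambda) = \partial f/\partial\lambda|_{c=c_1} = -\Phi(c_1)$. Since $p_\infty'(\lambda) = -1$, this gives $d'(\lambda) = 1 - \Phi(c_1) > 0$ because $c_1$ is finite. The main obstacle is the behaviour at the left endpoint $\lambda = l$, where the discriminant of \eqref{quad} vanishes, $c_1$ and $c_2$ coalesce, and $dc_1/d\lambda$ blows up; I avoid this by only claiming differentiability on the open interval $(l,1]$ together with continuity of $d$ up to $\lambda = l$ (which holds since $c_1(\lambda)$ extends continuously to its limiting value at $l$), which is enough to conclude strict monotonicity on the closed interval.

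For the endpoints, at $\lambda = l$ the common root $c_1 = c_2$ is a degenerate critical point at which $f$ is merely non-decreasing (its derivative touches zero without changing sign), so $f(c_1)$ is strictly below $\lim_{c\to\infty} f = p_\infty(l)$, giving $d(l) < 0$. At $\lambda = 1$, Lemma \ref{explicit_maximizer} already asserts that $c_1$ is the global maximum of $f$, hence $p_1(1) \ge p_\infty(1) = 0$ and $d(1) \ge 0$. Combining $d(l)<0\le d(1)$ with strict monotonicity and continuity yields a unique $k \in [l,1]$ with $d(k)=0$.

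Finally I would assemble the sup characterization. For $\lambda \ge l$, monotonicity of $d$ gives $d(\lambda) < 0$ exactly when $\lambda < k$, so the supremum sits at $+\infty$ for $l \le \lambda < k$, at $c_1$ for $\lambda > k$, and the two values agree at $\lambda = k$. For $\lambda < l$ the roots of \eqref{quad} are complex, $f$ is increasing, and the supremum is trivially at $+\infty$; since $l \le k$ this is consistent with the claim that $\lambda < k$ forces the supremum to be at $+\infty$. This completes the plan.
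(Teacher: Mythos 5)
Your proof is correct, and its skeleton coincides with the paper's: both use the envelope argument ($\partial f/\partial c$ vanishes at $c_1$, so $p_1'(\lambda) = -\Phi(c_1)$, hence $d'(\lambda) = 1-\Phi(c_1) > 0$), establish the sign of $d$ at the two endpoints of $[l(\eta,\gamma),1]$, and invoke the intermediate value theorem. The genuine difference is in the hardest step, the left endpoint. The paper proves $d(l) < 0$ by brute force: it plugs in $c_1(l) = -\eta/(\gamma^2-1)$, substitutes $x = \eta/(\gamma^2-1)$ to eliminate $\eta$, and reduces the claim to the inequality $e^{x^2/2}\Phi(x) \le \gamma\, e^{(\gamma x)^2/2}\Phi(\gamma x)$, which it verifies via monotonicity of $m(x) = x e^{x^2/2}\Phi(x)$ on $(-\infty,0]$, with separate treatment of the cases $\eta<0$, $\eta=0$, $\eta>0$. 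You instead argue structurally: at $\lambda = l$ the discriminant of \eqref{quad} vanishes, the upward parabola touches the level $2\log(\gamma\lambda)$ only at its vertex, so $\partial f/\partial c \ge 0$ everywhere with equality at a single point; thus $f$ is non-decreasing on all of $\mathbb{R}$, strictly increasing to the right of the double root, and its value there must lie strictly below the limit $1-l$ at $+\infty$. This is shorter, needs no case analysis in the sign of $\eta$, avoids the auxiliary calculus fact about $m$, and explains conceptually why the inequality holds (at coalescence the "local maximum" is not a maximum at all). You are also more careful than the paper about the square-root singularity of $dc_1/d\lambda$ at $\lambda = l$, differentiating only on $(l,1]$ and extending $d$ by continuity. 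Two harmless weakenings relative to the paper: you only get $d(1)\ge 0$ (via Lemma \ref{explicit_maximizer}) where the paper gets the strict $p_1(1) > 0 = p_\infty(1)$, and your uniqueness claim therefore allows the root to sit at $k=1$; both versions suffice for the stated lemma since $d(l)<0$ and $d$ is strictly increasing.
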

 
 \begin{proof} 
We have explicitly $p_{\infty}(\lambda) = 1 - \lambda$ and thus $p_{\infty}$ decreases linearly from 1 to 0 on the unit interval $[0,1]$. We also have by definition

$$p_1(\lambda) = \Phi\left(\frac{c_1(\eta,\gamma; \lambda) - \eta}{\gamma}\right)  - \lambda\Phi(c_1(\eta,\gamma; \lambda)).$$

Differentiating this expression with respect to $\lambda$ reveals

$$\frac{\partial p_1}{\partial \lambda}= \frac{1}{\gamma}\varphi\left(\frac{c_1 - \eta}{\gamma}\right) \frac{\partial c_1}{\partial \lambda}  - \left(\lambda\varphi(c_1)\frac{\partial c_1}{\partial \lambda}  + \Phi(c_1)\right).$$

However, by the definition of $c_1$ we have $\varphi((c_1 - \eta)/\gamma)  - \lambda\varphi(c_1) = 0$. Indeed, $c_1$ was defined as one of the extrema of $f$, which leads to the above equation by taking derivatives. Hence the expression above simplifies to: $\partial p_1/\partial \lambda=- \Phi(c_1)$. 

This shows that $p_1$ is decreasing in $\lambda$: the derivative belongs to $(-1,0)$. We also know that $p_1 >0$, because the value $p_1$ is a local maximum of $f$ in $c$, and we have seen that $f$ - as a function of $c$ -  increases from $0$ (a value which it  takes in the limit at $-\infty$) to $f(c_1)$ - and hence $p_1 = f(c_1)$ must necessarily be positive. We conclude that $p_1(1) > 0 = p_{\infty}(1)$

Finally, we note that $p_1(\lambda)$ is well-defined precisely when $c_1(\lambda)$ is. This happens when the expression inside the square root is non-negative, which means $\lambda \ge l(\eta,\gamma)$. 

To summarize

\begin{itemize}
\item $d$ is defined on the interval $\lambda \in [l(\eta,\gamma),1]$. 
\item $d$ is a strictly increasing differentiable function on this interval, because its derivative is $\frac{\partial d}{\partial \lambda}=1 - \Phi(c_1) > 0$.
\item $d(1) > 0.$
\end{itemize}

If we show that $d$ has a unique root $k$, our conclusion will follow. Given the above three statements, it's enough to show that $d(l(\eta,\gamma)) <0$; then the claim follows by the intermediate value theorem. 

To check the condition $d(l(\eta,\gamma)) <0$ we note that  $c_1(l(\eta,\gamma)) = - \frac{\eta}{\gamma^2-1}$, hence $p_1(l) < p_{\infty}(l)$ is equivalent to

$$ 
\Phi\left( - \frac{\eta \gamma}{\gamma^2-1}\right)  -l\Phi\left( - \frac{\eta}{\gamma^2-1}\right) < 1 - l, 
$$

or after some rearrangement $ 
l\Phi\left( \frac{\eta}{\gamma^2-1}\right)  < \Phi\left( \frac{\eta \gamma}{\gamma^2-1}\right) . $

Introducing the variable $ x  =  \frac{\eta}{\gamma^2-1}$ and eliminating $\eta$ by noting $l = \exp (-x^2 (\gamma^2 - 1)/2 )/\gamma$ we obtain that all we need to show is

$$\exp \left( \frac{x^2}{2}\right) \Phi(x) \le \exp \left( \frac{(\gamma x)^2}{2}\right) \Phi(\gamma x) \gamma.$$

If $\eta<0$, then $x<0$, and the above will follow if the function $m(x) = x \exp \left( \frac{x^2}{2}\right) \Phi(x)$ is strictly increasing on $(-\infty, 0]$. Checking this is an elementary calculus exercise. The cases $\eta=0$ and $\eta>0$ are handled similarly. This finishes the proof.

\end{proof}

\subsubsection{Method for computing weights}
\label{method_for_computing_weights}

In the previous section, we saw that for any $i$, if $\lambda < k_i := k(\eta_i,\gamma_i)$, then the supremum of $f(\cdot; \eta_i,\gamma_i)$ occurs at $+\infty$, else it occurs at $c_1(\eta_i,\gamma_i; \lambda)$. For the optimal weights this shows:
$$
w_i(\eta_i,\gamma_i; \lambda)=
\left\{
	\begin{array}{ll}
		\Phi(c_1(\eta_i,\gamma_i; \lambda))/q  & \mbox{if  }\lambda >  k(\eta_i,\gamma_i), \\
		\Phi(c_1(\eta_i,\gamma_i; \lambda))/q  \mbox{ or } 1/q& \mbox{if  }\lambda =  k(\eta_i,\gamma_i),  \\
		1/q & \mbox{if  }\lambda <  k(\eta_i,\gamma_i).
	\end{array}
\right.
$$

We emphasize that the value of $w_i$ can take either of two values for $\lambda =  k(\eta_i,\gamma_i)$. We will unambiguously specify a choice later for each $i$, determined by the constraint on the sum of the weights.

Now all that remains is to search for a suitable $\lambda$ such that the weights sum to $J$: $\sum_i w_i = J$. If we find such a $\lambda$, then by duality the weights $w_i$ will solve our original problem. Let us denote the sum of the weights by $W$:
$$
W(\lambda) = \sum_{i=1}^{J} w_i(\eta_i,\gamma_i; \lambda).
$$

The function $W(\lambda)$ is unambiguously defined for $\lambda \neq k_i$, which is what we will use at first.

Each function $w_i$ is decreasing in $\lambda$: constant on the interval $(0,k_i)$, and decreasing smoothly from $r_i = \Phi(c_1(\eta_i,\gamma_i; k_i))/q$ to 0 on the interval $(k_i, \infty)$. Further, the function $w_i$ has a jump of size $(1-r_i)/q$ at $k_i$.

Therefore $W$ is a decreasing function of $\lambda$ on $[0,\infty)$, going from $J/q$ to 0. Further, if we sort the values $k_i$ such that $k_{(0)} = 0 < k_{(1)} \le \ldots \le k_{(J)} < k_{(J+1)} = \infty$, then $W$ is smooth on the intervals $(k_{(i)},k_{(i+1)})$, and has jumps of size $(1-r_{(i)})/q$ at $k_{(i)}$ for $1 \le i \le J$. 

Hence, for our problem of solving $W(\lambda) = J$, there are two possibilities:
\begin{paragraph}{Case 1} There is an interval $(k_{(i)},k_{(i+1)})$ such that $J$ belongs to the image of this interval under $W$. 
In this case, since the function W is strictly decreasing and continuous on this interval, there will exist a unique $\lambda$ in the interval such that $W(\lambda)=J$. 
In this case we can solve the original problem exactly.
\end{paragraph}

\begin{paragraph}{Case 2} There is no such interval. In this case, the present duality approach is unable to produce an exact solution to the original problem.

However, we can get an approximate solution. Let us consider the values $W(k_i)$. Above we noted that $w_i(k_i)$ can take two possible values $1/q$ and $r_i$. Hence $W$ can also take more than one value at $k_i$. Since several $k_i$ may be equal, $W$ may take more than two possible values, because each summand $w_i$ can be chosen in two ways. To understand this, let us call the distinct values of $k_i$ to be $K_j$, and assume $K_1 < K_2 < \ldots < K_L$. For brevity let us also add $K_0 = 0$, $K_{L+1} = \infty$. Without loss of generality, we  assume that the values $k_i$ cluster in the following way: $K_1  = k_{(1)}  = k_{(2)}  = \ldots  = k_{(i_1)}$, $K_2  = k_{(i_1 + 1)}   = \ldots  = k_{(i_2)}$, and so on. We  also define the sets $S_i$ containing the indices $j$ for which $k_j = K_i$.

The value of $W$ is now defined unambiguously on each interval $(K_i, K_{i+1})$, and $W$ is a smooth decreasing function on this interval. At $K_m$, $W$ has a jump of size

$$
j_m = \sum_{i: k_i = K_m} \frac{1-r_{(i)}}{q} = \frac{ i_{m+1} - i_m  - \sum_{i=i_m+1}^{i_{m+1}} r_{(i)}}{q}.
$$

By assumption, in the current case there is no interval $(K_i, K_{i+1})$ whose image under $W$ contains $J$. Therefore $J$ is contained in one of the jumps occurring between these intervals, say the jump at $K_i$. Let us now choose $\lambda = K_i$. Let us also define the left and right limits of $W$ at $K_i$: $
a_{+}  = \lim_{x \to K_i, x < K_i} W(x)$, $
a_{-}  = \lim_{x \to K_{i}, x > K_i} W(x). $

Then $a_{-} < a_{+}$ and the current case entails that $ J \in [a_{-}, a_{+}]$. Let us also denote the values of the jumps by the more compact notation $s_i = (1-r_{(i)})/q$. With these notations, we can now easily state that the possible values of $W(K_i)$ are, for any set $S \subset S_i$

$$
W(K_i, S) = a_{-}  + \sum_{t \in S} s_t. 
$$

These consitute at most $2^{|S_i|}$ different values, but some of them may be equal. Nonetheless, the values $s_i$ are between 0 and $1/q$, so for any value $x \in   [a_{-}, a_{+}]$, we can find a suitable subset $S$ such that $W^* = W(K_i, S)$ approximates $x$ within $1/(2q)$. In fact, for any ordering of the $s_t$ values $s_{t_1}, s_{t_2},\ldots, s_{t_{|S_i|}}$, looking at the two cumulative sums $s_{t_1} + s_{t_2} + \ldots + s_{t_k}$ scoring nearest to $x$, we can find one closer than $1/(2q)$. This shows that the desired subset can be found in $O(|S_i|)$ steps; with more work we can possibly find a better packing. 

For this choice of the values $w_i$, the sum of $w_i$ equals $W^*$, and $|W^*- J| \le 1/(2q)$.  To summarize this second case: Even if we can't find a dual variable $\lambda$ such that the constraint is satisfied, we can find one such that the sum of $w_i$ equals $W^*$, and $|W^*- J| \le 1/(2q)$. 

\end{paragraph}

In fact we have found the exact optimal weights for a slightly different problem.  Define the scaled weights $\widetilde{w_i} = {Jw_i}/{W^*}$,
then $\widetilde{w_i}$ solves the optimization problem \ref{P_avg} with the parameter $q$ replaced by $q^* = W^*q/J$.  This is clear if we notice that the optimization problem can be parametrized by $q$, and everything we've said so far holds for each fixed $q$. Thus, for any value of $W^*$, we just need to find a value $q^*$ such that the constraint holds. This amounts to
$$
 \sum_{i=1}^{J} \Phi(c_i) =  W^*q = q^*J
$$
and we find the claimed equation for $q^*$. Further $|q^*-q| = |q (J - W^*)/J| \le 1/(2J)$, as desired. This finishes the proof that $q^*$ exists, we just need to show how to find it. 

\begin{algorithm}[!ht]
\caption{Method to compute weights}\label{weights_alg}
\begin{algorithmic}[1]
\State $\eta_i, \sigma_i^2 \gets$ prior means and variances $i=1,\ldots,J$
\State  $q \gets $ significance threshold 
\If{condition \eqref{dual_variable_existence} holds} 
\State{Solve $\sum_i \Phi(c_1(\eta_i,\gamma_i,\lambda)) = Jq$ for $\lambda \in [1,\infty)$ using Newton's method} 
\State \textbf{return} $w_i = \Phi(c_1(\eta_i,\gamma_i,\lambda))/q$, $q^* =q$
\Else 
\State{Define $d(\lambda;\eta,\gamma) = \lambda\Phi(-c_1(\eta,\gamma; \lambda)) - \Phi\left(-(c_1(\eta,\gamma; \lambda) - \eta)/\gamma\right)  $}
\State \textbf{for all} {i} Solve $d(k_i;\eta_i,\gamma_i) =0$ for $k_i \in [l(\eta_i,\gamma_i),1]$ using Brent
\State{Let the sorted unique values $k_i$ be $K_{(i)}$}
\State{Define $w_i^{-}(\lambda) =
\left\{
	\begin{array}{ll}
		\Phi(c_1(\eta_i,\gamma_i; \lambda))/q  & \mbox{if  }\lambda \ge  k_i \\

		1/q & \mbox{if  }\lambda <  k_i
	\end{array}
\right. $}
\State{Define $s_i(\lambda) = I(\lambda=k_i) (1/q-w_i^{-}(\lambda))$}
\State{Define $w_i^{+}(\lambda)  = w_i^{-}(\lambda) + s_i(\lambda)$ }
\State{Define $W^{+}(\lambda)  = \sum_i w_i^{+}(\lambda)$ and $W^{-}(\lambda)  = \sum_i w_i^{-}(\lambda)$}
\State Find the index $j$ for which $W^+(K_{(j)}) \ge Jq > W^+(K_{(j+1)})$ via binary search
\If {$W^-(K_{(j)})  \ge Jq > W^+(K_{(j+1)})$}
 \State{Solve $W^+(\lambda) = Jq$ for $\lambda \in [K_{(j)},K_{(j+1)})$ using Brent} 
\State \textbf{return} $w_i = w^+_i(\lambda)$, $q^* =q$
\Else
\State Find the indices $S = \{j_1,\ldots,j_n\}$ such that $k_{j_i} = K_{(i)}$
\State Find the largest index $T$ such that $r^- := W^-(K_{(j)}) + \sum_{i \le T} S_{j_i}(K_{(j)}) \le Jq$
\State Define $r^+ = r^- + S_{j_{T+1}}(K_{(j)})$
\State Let $W^*$ be the closer of $r^-,r^+$ to $Jq$ (break ties arbitrarily)
\State Define $w_i^* = 
\left\{
	\begin{array}{ll}
		w^+_i(K_{(j)})  & \mbox{if  } i \notin S  \\
		w^+_i(K_{(j)})  & \mbox{if  } i = j_k, \mbox{ for } k>T+1 \\
		w^-_i(K_{(j)})  & \mbox{if  } i = j_k, \mbox{ for } k \le T \\
		w^\pm_i(K_{(j)})  & \mbox{if  } i = j_{T+1}, \mbox{ depending on the choice of } r^\pm \\
	\end{array}
\right. $
\State \textbf{return} $\widetilde{w_i} = {Jw^*_i}/{W^*}$, $q^* = W^*q/J$.
    \EndIf
\EndIf
\end{algorithmic}
\end{algorithm}

\paragraph{Final algorithm} The final algorithm is sumarized in Algorithm \ref{weights_alg}. We compute the values $k_i$ for each $i$, using Brent's method on the equation $d(k)=0$. This takes $O(J)$ steps.  Next, we sort the unique values $k_i$ ($O(J \log J)$ steps), and do a binary search on the values $W(k_i)$ to find the right interval for $\lambda$. Calculating a value $W(\lambda)$ takes $O(J)$ steps. The binary search takes $O(\log J)$ function evaluations of $W$, making the total cost of this step $O(J\log J)$. 

Once we found the right interval, we have to deal with the two possibilities identified above. If there is an exact solution $W(\lambda)=J$, then after finding the right interval, Brent's method is used to solve the equation ($O(J )$ steps). On this interval, the function $W$ is smooth with explicitly computable derivatives. If there is no exact solution, then we simply return the closest interval endpoint ($O(J )$ steps). The overall cost is $O(J \log J)$. 

\section{Data sources}

\label{detailed_data}
\subsection{CARDIoGRAM - CAD}
This is a meta-analysis of 14 coronary artery disease (CAD) GWAS, comprising 22,233 cases and 64,762 controls of European descent \citep{schunkert2011large}. The study includes 2.3 million SNPs. In each of the 14 studies and for each SNP, a logistic regression of CAD status was performed on the number of copies of one allele, along with suitable controlling covariates.
The resulting effect sizes were combined across studies using fixed effects or random effects meta-analysis with inverse variance weighting. 

\subsection{C4D - CAD}

This is a meta-analysis of 5 heart disease GWAS, totalling 15,420 CAD cases and  and 15,062 controls \citep{coronary2011genome}. The samples did not overlap those from CARDIoGRAM. The analysis steps were similar to CARDIoGRAM. 

\subsection{Chronic Kidney Disease Consortium (CKDGen) - eGFR creatinine}

This is a GWAS of kidney traits in
67,093 participants of European ancestry from 20 population-based cohorts \citep{kottgen2010new}. eGFR creatinine (eGFRcrea) was the trait with the largest sample size. There is no reported overlap with the samples from C4D. The analysis steps were similar to the previous two studies.

\subsection{Blood Lipids}
This is a GWAS of blood lipids in a sample from European populations \citep{teslovich2010biological}. Triglyceride levels (TG) were one of the traits, with sample size 96,598, chosen here out of all lipids because of its previous appearance in \citep{andreassen2013improved}. Standard protocols for GWAS were used: linear regression analysis with study-specific covariates, combined using fixed-effects meta-analysis. 
 
\subsection{Psychiatric Genomics Consortium - Schizophrenia}

This is a mega-analysis (i.e. using the raw data not just summaries) combining GWAS data from 17 separate studies of schizophrenia (SCZ), with a total of 9,394 cases and 12,462 controls \citep{schizophrenia2011genome}. They tested for association using logistic regression of SCZ status on the allelic dosages. 
 The overlap with the blood lipids study consists of 1,459 controls ($12\%$ of controls in the SCZ study), from the British 1958 Birth Cohort of the Wellcome Trust Case Control Consortium. 

\bibliographystyle{apalike}
\bibliography{weighted}

\end{document}